\newcommand*{\pd}
[2]{\mathchoice{\frac{\partial#1}{\partial#2}}
  {\partial#1/\partial#2}{\partial#1/\partial#2}
  {\partial#1/\partial#2}}
\newcommand*{\fd}
[2]{\mathchoice{\frac{\delta#1}{\delta#2}}
  {\delta#1/\delta#2}{\delta#1/\delta#2}
  {\delta#1/\delta#2}}
\newcommand{\ddx}[1]{\partial_x^{#1}}
\newtheorem{theorem}{Theorem}
\newtheorem*{conjecture*}{Conjecture}
\newtheorem*{algorithm*}{Algorithm}
\newtheorem{definition}[theorem]{Definition}
\newtheorem{remark}[theorem]{Remark}
\begin{document}

\title{WDVV equations and\\
  invariant bi-Hamiltonian formalism}
\author{J. Va\v{s}\'\i\v{c}ek$^{1,2}$, R. Vitolo$^{2,3}$ \\[3mm]
  \small $^1$Mathematical Institute of the Silesian University,
  \\
  \small Silesian University, Opava, Czech Republic
  \\
  \small \texttt{jakub.vasicek@math.slu.cz}
  \\
  \small $^{2}$Department of Mathematics and Physics \textquotedblleft E. De
  Giorgi\textquotedblright ,\\
  \small University of Salento, Lecce, Italy\\
  \small and  $^3$Istituto Nazionale di  Fisica Nucleare -- Sez.\ Lecce\\
  \small \texttt{raffaele.vitolo@unisalento.it} } \date{\itshape
  Respectfully dedicated to the memory of\\[0.5mm]
  Boris Anatolevich Dubrovin (1950--2019)}
\maketitle

\begin{abstract}
  The purpose of the paper is to show that, in low dimensions, the WDVV
  equations are bi-Hamiltonian. The invariance of the bi-Hamiltonian formalism
  is proved for $N=3$. More examples in higher dimensions show that the result
  might hold in general. The invariance group of the bi-Hamiltonian pairs that
  we find for WDVV equations is the group of projective transformations. The
  significance of projective invariance of WDVV equations is discussed in
  detail. The computer algebra programs that were used for calculations
  throughout the paper are provided in a GitHub repository.

  \bigskip

  \noindent MSC: 37K05, 37K10, 37K20, 37K25.

  \bigskip

  \noindent Keywords: Hamiltonian Operator, bi-Hamiltonian formalism, WDVV
  equation, Quadratic Line Complex, Monge Metric, Reciprocal Transformation,
  Projective Transformation.
\end{abstract}

\newpage

\tableofcontents

\section{Introduction}

The Witten--Dijkgraaf--Verlinde--Verlinde (WDVV) equations are an
overdetermined system of Partial Differential Equations (PDEs) that originated
in two-dimensional topological field theory
\cite{DIJKGRAAF199159,WITTEN1990281}. Since then, they have been a central
subject in Theoretical Physics, with applications ranging from supersymmetric
quantum mechanics \cite{Antoniou2019,Galajinsky2016,Kozyrev2018}, topological
quantum field theory \cite{Gomez2021}, string theory
\cite{Belavin2014,Ding2016} and supersymmetric gauge theory \cite{Jockers2019}.

In nearly all the above research subjects, WDVV equations are associated with
integrable systems. The deep connections between these two fields were first
described in the seminal work of B.A. Dubrovin \cite{D96}. Since then, WDVV
equations continue to attract the attention of Mathematicians. Few recent works
show the depth and breadth of their investigation \cite{Basalaev_2021,Cao2021,
  solomon2021relative,StedmanStrachanJMP2021,
  ferapontov2020secondorder,cotti2020degenerate,
  zinger2020real,Shen_2017,Strachan_2017}.

The aim of this paper is to present new results on the Hamiltonian formalism
for WDVV equations and related geometric properties, and eventually stimulate
further research along this direction in the field.

The Hamiltonian formalism for PDEs has been developed for more than 50 years
with the idea to reproduce geometric structures and mathematical results
analogous to those of integrable systems in Hamiltonian mechanics (see,
\emph{e.g.}, \cite{NovikovManakovPitaevskiiZakharov:TS}).

The main difficulty in bringing the Hamiltonian formalism to the WDVV equations
is that the Hamiltonian formalism was developed for evolution PDEs, while
WDVV equations are an overdetermined system of PDEs in a single unknown
function. More precisely, an evolutionary system of PDEs of the form
\begin{equation}
  \label{eq:9}
  u^i_t = f^i(u^j,u^j_x,u^j_{xx},\ldots),\qquad i=1,\ldots,n
\end{equation}
in $n$ unknown functions of two independent variables $u^i=u^i(t,x)$ is said to
be \emph{Hamiltonian} if there exists a linear differential operator
$A=A^{ij\sigma}\partial_{\sigma}$, where
$A^{ij\sigma}=A^{ij\sigma}(u^k,u^k_x,u^k_{xx},\ldots)$ and
$\partial_\sigma=\partial_{x}\circ\cdots\circ\partial_x$ ($\sigma$-times), and
a density $H=\int h\,dx$, where $h=h(u^k,u^k_x,u^k_{xx},\ldots)$ such that
\begin{equation}
  \label{eq:30}
  u^i_t= f^i(u^j,u^j_x,u^j_{xx},\ldots) = A^{ij\sigma}\partial_\sigma\fd{H}{u^j}.
\end{equation}
$H$ is said to be a \emph{Hamiltonian density}.
The operator $A$ is required to define a \emph{Poisson bracket} between
conserved densities $F$, $G$ of the PDE:
\begin{equation}
  \label{eq:35}
  \{F,G\}_A = \int\fd{F}{u^i}A^{ij\sigma}\partial_\sigma \fd{G}{u^j}\,dx.
\end{equation}
The skew-symmetry of the Poisson bracket is equivalent to the
skew-ad\-joint\-ness of the operator: $A^*=-A$, and the Jacobi identity is
equivalent to the vanishing of the \emph{Schouten bracket} of the operator:
$[A,A]=0$ (see, \emph{e.g.} \cite{Dorfman:DSInNEvEq,GelfandDorfman:SBHOp,
  KVV17,magri08:_hamil_poiss}). An operator $A$ fulfilling the above properties
is said to be a \emph{Hamiltonian operator}.

A \emph{bi-Hamiltonian} system of PDEs is just a system of PDEs that is
Hamiltonian with respect to two operators $A_1$, $A_2$ (and the respective
Hamiltonian densities). The operators are required to be \emph{compatible}:
their Schouten bracket vanishes $[A_1,A_2]=0$, or the pencil $A_1+\lambda A_2$
is a Hamiltonian operator for every $\lambda\in\mathbb{R}$. In this case,
Magri's Theorem \cite{Magri:SMInHEq} yields an infinite sequence of commuting
conserved quantities or symmetries, which is usually identified with
\emph{integrability}.

In order to see how to transform the WDVV system into an evolutionary system,
we shall first recall the basic notions. We will follow
\cite{dubrovin06:_encyc_mathem_physic}. The mathematical problem is: in
$\mathbb{R}^N$ find a function $F=F(t^1,\ldots,t^N)$ such
that\label{page:introduction}
\begin{enumerate}
\item
  \begin{sloppypar}$\displaystyle \frac{\partial^3 F}{\partial t^1\partial
      t^\alpha\partial t^\beta} = \eta_{\alpha\beta}$ is a constant symmetric
    nondegenerate matrix;
  \end{sloppypar}
\item
  $c^\gamma_{\alpha\beta} = \eta^{\gamma\epsilon} \displaystyle
  \frac{\partial^3 F}{\partial t^\epsilon\partial t^\alpha\partial t^\beta}$
  are the structure constants of an associative algebra;\label{item:2}
\item $F$ is quasihomogeneous:
  $F(c^{d_1}t^1,\ldots,c^{d_N}t^N) = c^{d_F}F(t^1,\ldots,t^N)$.\label{item:1}
\end{enumerate}
If $e_1$,\dots, $e_N$ is the basis of $\mathbb{R}^N$ then the algebra operation
is $e_\alpha\cdot e_\beta = c^\gamma_{\alpha\beta}(\mathbf{t})e_\gamma$ with
unity $e_1$. The WDVV, or associativity, system of PDEs takes the form
\begin{equation}\label{eq:5}
  \eta ^{\mu \lambda }\frac{\partial ^{3}F}{\partial t^{\lambda }\partial
    t^{\alpha }\partial t^{\beta }}\frac{\partial ^{3}F}{\partial t^{\nu
    }\partial t^{\mu }\partial t^{\gamma }}=\eta ^{\mu \lambda }\frac{\partial
    ^{3}F}{\partial t^{\nu }\partial t^{\alpha }\partial t^{\mu }}\frac{\partial
    ^{3}F}{\partial t^{\lambda }\partial t^{\beta }\partial t^{\gamma }}.
\end{equation}
The unknown of the system is not exactly $F$, as the above requirements
completely specify the functional dependence from $t^1$ (up to a second degree
polynomial, see \cite{dubrovin06:_encyc_mathem_physic}):
\begin{equation}\label{eq:7}
  F=\frac{1}{6}\eta_{11}(t^1)^3 +
  \frac{1}{2}\sum_{k>1}\eta_{1k}t^k(t^1)^2 + \frac{1}{2}\sum_{k,s>1}
  \eta_{sk}t^st^kt^1 + f(t^2,\ldots,t^N).
\end{equation}
This implies that the WDVV system is an overdetermined system in one unknown
function $f$ of $N-1$ independent variables. Just as an example, in the case
$N=3$ we have a single equation on $f=f(t^2,t^3)=f(x,t)$. It was proved in
\cite{D96} that, when $\eta_{11}=0$, the matrix $\eta_{\alpha\beta}$ can be
transformed by a linear change of coordinates that preserves $\pd{}{t^1}$ to
\begin{equation}\label{eq:8}
  \eta_{\alpha\beta}= \delta_{\alpha+\beta,N+1}=
  \begin{pmatrix}
    0 & 0 & 1
    \\
    0 & 1 & 0
    \\
    1 & 0 & 0
  \end{pmatrix},
\end{equation}
and the WDVV equation becomes
\begin{equation}\label{eq:13}
  f_{ttt}=f_{xxt}^2 - f_{xxx}f_{xtt}.
\end{equation}

A technique was developed by O. Mokhov \cite{mokhov95:_sympl_poiss} in order to
rewrite the WDVV equation in the case $N=3$ as a first-order quasilinear
system, or hydrodynamic-type system, of PDEs. Namely, if we introduce
coordinates $a=f_{xxx}$, $b=f_{xxt}$, $c=f_{xtt}$ then for~\eqref{eq:13} we
have the compatibility conditions
\begin{equation}\label{eq:6}
  \left\{
    \begin{array}{l}
      a_t = b_x,\\ b_t = c_x,\\ c_t = (b^2 - ac)_x
    \end{array}
  \right.
\end{equation}
We will say that the above system is a \emph{first-order WDVV system}.  The
above system is of the general conservative first-order quasilinear form
\begin{equation}
  \label{eq:15}
  u^i_t = (V^i(\mathbf{u}))_x = \pd{V^i}{u^j}u^j_x,
\end{equation}
where $u^i=u^i(t,x)$ are field variables, $i=1$, \dots, $n$. The above
representation allowed to find a bi-Hamiltonian formalism for the
equation~\eqref{eq:13} \cite{FGMN97}:
\begin{equation}\label{eq:16}
  u_{t}^{i}=A_{1}^{ij}\fd{H_2}{u^j}=A_{2}^{ij}\fd{H_{1}}{u^j}.
\end{equation}%
with respect to two compatible local Hamiltonian operators $A_{1}$ and $A_{2}$,
with expressions
\begin{subequations}
  \label{eq:11}
  \begin{gather}\label{eq:39}
    A_{1}=%
    \begin{pmatrix}
      -\frac{3}{2}\partial _{x}^{{}} & \frac{1}{2}\partial _{x}^{{}}a &
      \partial
      _{x}^{{}}b \\
      \frac{1}{2}a\partial _{x}^{{}} & \frac{1}{2}(\partial
      _{x}^{{}}b+b\partial
      _{x}^{{}}) & \frac{3}{2}c\partial _{x}^{{}}+c_{x} \\
      b\partial _{x}^{{}} & \frac{3}{2}\partial _{x}^{{}}c-c_{x} &
      (b^{2}-ac)\partial _{x}^{{}}+\partial _{x}^{{}}(b^{2}-ac)%
    \end{pmatrix},
    \\
    \label{eq:40}
    A_{2}=%
    \begin{pmatrix}
      0 & 0 & \partial _{x}^{3} \\
      0 & \partial _{x}^{3} & -\partial _{x}^{2}a\partial _{x} \\
      \partial _{x}^{3} & -\partial _{x}a\partial _{x}^{2} & \partial
      _{x}^{2}b\partial _{x}+\partial _{x}b\partial _{x}^{2}+\partial
      _{x}a\partial _{x}a\partial _{x}%
    \end{pmatrix}.
  \end{gather}%
\end{subequations}
The above Hamiltonian operators are homogeneous with respect to the grading
$\deg \partial _{x}=1$. The Hamiltonian densities are $H_2 = \int c\,dx$ (for
$A_1$) and
$H_1=\int[ -(1/2)a(\partial_x^{-1}b)^2 -
(\partial_x^{-1}b)(\partial_x^{-1}c)]\,dx$ (for $A_2$; this one is nonlocal).
It should be remarked that B. Dubrovin proved that WDVV equations are
integrable by providing a Lax pair for arbitrary $N$ \cite{D96}. Nonetheless,
knowing the bi-Hamiltonian formalism for a system of PDEs is an additional
source of information. In the WDVV case, it will be shown here that the
additional information is provided by the invariance properties of the
bi-Hamiltonian structure.

First-order homogeneous Hamiltonian operators (HHOs) were introduced in
\cite{DN83}. They have the form
\begin{equation}
  \label{eq:17}
  A_1^{ij} = g^{ij}\partial_x + \Gamma^{ij}_k u^k_x,
\end{equation}
where $g^{ij}=g^{ij}(\mathbf{u})$ transforms as a symmetric contravariant
tensor (we will always assume that $\det(g^{ij})\neq 0$) whose inverse $g_{ij}$
is a flat pseudo-Riemannian metric with Christoffel symbols
$\Gamma^{i}_{jk} = - g_{js}\Gamma^{si}_k$.

It should be stressed that solutions of WDVV equations yield Frobenius
manifolds, or integrable hierarchies of PDEs defined by bi-Hamiltonian pairs of
first-order HHOs (see, \emph{e.g.},
\cite{dubrovin01:_normal_pdes_froben_gromov_witten,dubrovin98:_flat_froben,D96}).
However, the above quasilinear system of first-order PDEs \eqref{eq:6} is
\emph{exceptional} with respect to the theory of Frobenius manifolds as it is
bi-Hamiltonian with respect to a pair of a first-order HHO and a third-order
HHO. Higher order homogeneous Hamiltonian operators were introduced in
\cite{DubrovinNovikov:PBHT}, and have a considerably more complicated structure
than \eqref{eq:17}. Third-order HHOs can always be transformed to the canonical
form
\begin{equation}
  \label{eq:18}
  A_2 = \partial_x(h^{ij}\partial_x + c^{ij}_k u^k_x)\partial_x,
\end{equation}
(again, we require that the leading coefficient is non-degenerate:
$\det(h^{ij})\neq 0)$
\cite{doyle93:_differ_poiss,potemin91:PhDt,potemin97:_poiss,balandin01:_poiss}
which is invariant with respect to the action of projective reciprocal
transformations \cite{FPV14,FPV16} (see Section \ref{sec:preliminaries} for
more details).

In further papers it was shown that a bi-Hamiltonian formulation as above
exists for a different choice of the matrix $\eta_{ij}$ (in the case $N=3$)
\cite{kalayci97:_bi_hamil_wdvv} or after the exchange of $t$ and $x$ in
\eqref{eq:13} \cite{kalayci98:_alter_hamil_wdvv}, and much more recently, in
the case $N=4$ for $\eta^{(1)}$ \eqref{eq:14} \cite{PV15}. In a new interesting
paper \cite{mokhov18:_class_hamil}, the classification of $N=3$ WDVV equations
admitting a Hamiltonian formalism with a \emph{local} first-order HHO as in
\eqref{eq:11} was given.

It was natural to try to prove that WDVV equations admit a bi-Hamilto\-nian
formulation by means of a compatible pair of a first-order HHO and a
third-order HHO for \emph{any choice of $\eta$}. The idea for the proof is
first to prove that the invariance group of WDVV equations, \emph{i.e.}  linear
transformations in the space $(t^1,\ldots, t^N)$ that leave $\pd{}{t^1}$
invariant \cite{dubrovin06:_encyc_mathem_physic}, do not change the form of a
bi-Hamiltonian pair as above. Then, it is enough to prove the statement only on
normal forms with respect to the invariance group.

Indeed, using the invariance group of the problem, one can reduce the matrix
$\eta_{ij}$ to two canonical forms if the quasihomogeneity weights are
distinct:
\begin{description}
\item[$\eta_{11}=0$:] the canonical form is
  $\eta^{(1)}=(\eta^{(1)}_{\alpha\beta})$ with
  \begin{equation}\label{eq:14}
    \eta^{(1)}_{\alpha\beta}= \delta_{\alpha+\beta,N+1}=
    \begin{pmatrix}
      0 & & 1
      \\
      & \iddots &
      \\
      1 & & 0
    \end{pmatrix}
  \end{equation}
  where
  $F = \frac{1}{2}(t^1)^2t^N + \frac{1}{2}t^1\sum_{\alpha=2}^{N-1}t^\alpha
  t^{N-\alpha+1} + f(t^2,\ldots,t^N)$;
\item[$\eta_{11}\neq 0$:] this case can only happen if $d_F=3$, and the
  canonical form is $\eta^{(2)}=(\eta^{(2)}_{\alpha\beta})$ with
  
  \begin{equation}\label{eq:10}
    \eta^{(2)}_{\alpha\beta}= \delta_{\alpha+\beta,N+1}=
    \begin{pmatrix}
      \mu & & 1
      \\
      & \iddots &
      \\
      1 & & 0
    \end{pmatrix}
  \end{equation}
  where $\mu\neq 0$ and
  $F = \frac{\mu}{6}(t^1)^3 + \frac{1}{2}t^1\sum_{\alpha=2}^{N}(t^{\alpha})^2 +
  f(t^2,\ldots,t^N)$.
\end{description}
If one drops the quasihomogeneity request on $F$, a smaller group of linear
transformations can be used to show that if $N=3$ then there are $4$ distinct
canonical forms \cite{mokhov18:_class_hamil}.  The results in
\cite{mokhov18:_class_hamil} also suggested that we should consider a wider
class of first-order HHOs, namely the non-local Ferapontov operators (see
\emph{e.g.} \cite{F95:_nl_ho}). It should be stressed that the necessary
theoretical background and software for computations with nonlocal operators
was not available until recently \cite{m.20:_weakl_poiss,CLV19}.

That led to the first part of the results in this paper. Namely, in the case
$N=3$ we have:
\begin{itemize}
\item for any choice of the matrix $\eta$ the first-order WDVV systems
  admit a third-order homogeneous Hamiltonian operator in canonical form
  \begin{equation}\label{eq:122}
    A_2 = \partial_x(h^{ij}\partial_x + c^{ij}_ku^k_x)\partial_x;
  \end{equation}
\item the first-order WDVV systems defined by matrices $\eta$ in the orbit of
  $\eta^{(1)}$ admit a first-order \emph{local} homogeneous
  Hamiltonian operator of the type
  \begin{equation}
    \label{eq:171}
    A_1^{ij} = g^{ij}\partial_x + \Gamma^{ij}_k u^k_x;
  \end{equation}
\item  the first-order WDVV systems defined by matrices $\eta$ in the orbit of
  $\eta^{(2)}$ admit a first-order \emph{non-local} homogeneous
  Hamiltonian operator of Ferapontov type
  \begin{multline}
    \label{eq:37}
    A_1^{ij} = g^{ij}\ddx{} + \Gamma^{ij}_{k}u^k_x + \alpha V^i_qu^q_x\ddx{-1}
    V^j_pu^p_x
    \\
    +\beta\left( V^i_qu^q_x\ddx{-1}u^j_x + u^i_x\ddx{-1}V^j_qu^q_x \right)
    +\gamma u^i_x\ddx{-1}u^j_x,
  \end{multline}
  where $V^i_j=\pd{V^i}{u^j}$ is the matrix of velocities of the first-order
  WDVV system \eqref{eq:15} and $\alpha$, $\beta$, $\gamma$ are three
  constants;
\item finally, first-order WDVV systems are bi-Hamiltonian: the Schouten
  bracket of the two operators vanishes, $[A_1,A_2]=0$, or the pencil
  $A_1+\lambda A_2$ is a Hamiltonian operator for every $\lambda\in\mathbb{R}$.
\item We also realized that the quasihomogeneity of $F$ in the assumptions on
  the WDVV problem (item 3 on page \pageref{item:1}) can be dropped without
  changing all our results in the above items. Indeed, in a recent paper
  \cite{mokhov18:_class_hamil} Mokhov and Pavlenko classified the WDVV
  equations without the requirement of quasihomogeneity of the solutions, and
  obtained $4$ canonical forms. For all of them we recover the bi-Hamiltonian
  pair, as we will show in Section~\ref{sec:o.i.-mokhov-n.a}.
\end{itemize}

The above results imply that in the case $N=3$ the WDVV quasilinear
first-order systems are \emph{linearly degenerate, non diagonalizable and in
  the Temple class}, as it follows from the main results in
\cite{FPV17:_system_cl}. Indeed, the presence of third-order operators yields
many interesting properties of the underlying first-order quasilinear system of
PDEs.

In higher dimensions proving a general invariance theorem is more difficult,
and will be considered in the future. However, if $N=4$ and $\eta=\eta^{(1)}$
it was already known that the first-order WDVV system had a first-order local
HHO \cite{ferapontov96:_hamil} and a compatible third-order HHO \cite{PV15}. In
this paper we prove that
\begin{itemize}
\item if $N=4$ and $\eta=\eta^{(2)}$ then the first-order WDVV system admits a
  third-order HHO of the form~\eqref{eq:18};
\item if $N=5$ and $\eta=\eta^{(1)}$ or $\eta=\eta^{(2)}$ (with $\mu=1$, as the
  case with an arbitrary $\mu\neq 0$ was beyond the capabilities of our
  servers), then the first-order WDVV systems admit a third-order HHO of the
  form~\eqref{eq:18}.
\end{itemize}

We did not try to find the first-order operator in the case $N=4$ and
$\eta=\eta^{(2)}$ or when $N=5$: indeed, the results in
\cite{bogoyavlenskij96:_neces_hamil} that we used to find the first-order
operators in the case $N=3$ do not hold when $N\geq 4$.

Even if invariance is fully stated only in the case $N=3$, the presence of
first-order and third-order HHOs in higher dimensions is enough to support the
following conjecture.

\begin{conjecture*} The WDVV equations in the form of quasilinear systems of
  first-order PDEs are bi-Hamiltonian with respect to a pair of a third-order
  HHO in canonical form~\eqref{eq:18} and a first-order HHO, which can either
  be local \eqref{eq:17} in the case $\eta=\eta^{(1)}$, or nonlocal of general
  Ferapontov type
  \begin{equation}
    \label{eq:371}
    A_1^{ij} = g^{ij}(\mathbf{u})\ddx{} + \Gamma^{ij}_{k}(\mathbf{u})u^k_x +
    \sum_\alpha c^{\alpha\beta} w^i_{\alpha k}(\mathbf{u})u^k_x
    \partial^{-1}_x w^j_{\beta h}(\mathbf{u}) u^h_x,
  \end{equation}
  where $c^{\alpha\beta}$ is a constant symmetric matrix, in the case
  $\eta=\eta^{(2)}$.
\end{conjecture*}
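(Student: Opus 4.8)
The plan is to follow the two-stage strategy that already succeeds for $N=3$: first prove an invariance theorem, and then verify the bi-Hamiltonian pair on the finitely many canonical forms $\eta^{(1)}$ \eqref{eq:14} and $\eta^{(2)}$ \eqref{eq:10}. The point of the first stage is that the invariance group of the WDVV system---the linear changes of $(t^1,\dots,t^N)$ fixing $\partial/\partial t^1$---must be shown to act on the first-order system \eqref{eq:15} by transformations under which an operator of the conjectured shape is carried to another operator of the same shape. Once this is in place, it suffices to exhibit the pair on one representative of each orbit, reducing an assertion about all $\eta$ to two explicit computations.

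For the invariance stage I would argue that the WDVV invariance group acts on \eqref{eq:15} as a group of \emph{projective reciprocal transformations}. Third-order HHOs in the canonical form \eqref{eq:18} are invariant under precisely this group \cite{FPV14,FPV16}, so the shape of $A_2$ is preserved with no further work; the cleanest way to see this is through the quadratic-line-complex encoding of third-order operators, under which the projective action on the complex is manifest. The matching statement for the first-order operator is more delicate: for $\eta=\eta^{(1)}$ a local operator \eqref{eq:17} should be carried to a local operator, whereas for $\eta=\eta^{(2)}$ the projective action mixes the local part with nonlocal Ferapontov terms built from the velocities $V^i_j$, which is exactly why the tail in \eqref{eq:371} governed by the constant symmetric matrix $c^{\alpha\beta}$ is forced. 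Making this invariance argument work uniformly in $N$ is the step I expect to be the main obstacle, since the transparent linear-algebra computation available for $N=3$ becomes unwieldy and no structural replacement is yet known.

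Granting invariance, the remaining task is to construct the pair on the two canonical forms. For the third-order operator I would seek the leading coefficient $h^{ij}$ in \eqref{eq:18} as a Monge metric, i.e.\ with entries quadratic in $\mathbf{u}$, using the quadratic WDVV velocities to fix a natural ansatz; the conditions of \cite{FPV14} then determine $c^{ij}_k$ and must be checked for consistency. For the first-order operator I would impose the Dubrovin--Novikov conditions---a flat metric $g_{ij}$ with compatible Christoffel symbols---in the $\eta^{(1)}$ case, and the Ferapontov conditions \cite{F95:_nl_ho} in the $\eta^{(2)}$ case. Here the difficulty already flagged in the text reappears: the criterion of \cite{bogoyavlenskij96:_neces_hamil} used to \emph{find} the first-order operator for $N=3$ fails for $N\ge 4$, so one must produce the metric $g$ directly from the geometry of the Frobenius structure rather than from a solvability argument.

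Finally I would verify compatibility, $[A_1,A_2]=0$, equivalently that the pencil $A_1+\lambda A_2$ is Hamiltonian for every $\lambda$. On each canonical form this is a Schouten-bracket computation that I would organise by the grading $\deg\partial_x=1$, so that the bracket splits into homogeneous pieces checkable separately, letting the projective invariance cut down the number of independent relations. The principal uncertainty is whether the explicit operators admit a closed form valid for \emph{every} $N$: the $N=4,5$ evidence strongly suggests they do, but what a full proof of the conjecture requires is a uniform construction in place of the case-by-case verification that is otherwise limited only by available computing power.
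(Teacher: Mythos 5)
This statement is stated in the paper as a \emph{conjecture}: the paper offers no proof of it, only supporting evidence (a complete argument for $N=3$ via Theorems~\ref{th:invariance}, \ref{th:dubeta1} and \ref{th:dubeta2}, plus the existence of third-order operators for $N=4,5$ in Section~\ref{sec:wdvv-equations-n=4}). Your proposal reproduces exactly the paper's two-stage programme --- an invariance theorem reducing to canonical forms of $\eta$, followed by explicit construction of the pair on $\eta^{(1)}$ and $\eta^{(2)}$ --- and you correctly identify the two places where it breaks down for $N\geq 4$: the invariance argument (the paper notes that for $N=4$ the transformations mix the independent variables and act on the $N-2$ commuting systems in a complicated way, so the clean factorization $P=R_1\circ E\circ R_2$ used for $N=3$ has no known analogue) and the construction of the first-order operator (the Bogoyavlenskij criterion \cite{bogoyavlenskij96:_neces_hamil} giving $g_{ij}$ as a contraction of the squared Haantjes tensor is valid only for $n=3$ components). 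So what you have written is an accurate account of the open programme, not a proof; the gaps you flag are precisely why the statement remains a conjecture in the paper.

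Two smaller points of calibration against the paper. First, your claim that the WDVV invariance group ``acts as a group of projective reciprocal transformations'' is established in the paper only for $N=3$, and the paper is careful to note that the projective reciprocal group is strictly \emph{larger} than the WDVV invariance group; for $N\geq 4$ even the embedding of the invariance group into reciprocal transformations of the form~\eqref{recip} is not worked out. Second, for the verification stage the paper does not attempt any closed-form construction uniform in $N$: the third-order operators for $N=4,5$ are found by solving the linear algebraic system~\eqref{V} for the Monge metric $h_{ij}$ by computer, and no first-order operator at all is exhibited for $N=4$, $\eta=\eta^{(2)}$ or for $N=5$. Your suggestion to produce $g$ ``directly from the geometry of the Frobenius structure'' is a reasonable direction but is not something the paper does or knows how to do; supplying it would be the genuinely new ingredient needed to turn this programme into a proof.
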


There are interesting implications of the conjecture. Indeed, it was proved in
\cite{FPV14,FPV17:_system_cl} that third-order HHOs can be regarded as
distinguished projective varieties, namely, \emph{quadratic line
  complexes}. They determine families of varieties, \emph{linear line
  congruences}, that correspond to first-order quasilinear systems of PDEs
(first-order WDVV being one such systems in all known cases).

As the above correspondence between integrable systems and projective varieties
is non-standard, it is instructive to show it in the simplest example of WDVV
equation \eqref{eq:13}. We will use the form~\eqref{eq:6}. The leading
coefficient $\bar{g}=(g^{ij})$ of $A_2$ in \eqref{eq:40}:
\begin{equation}
  \label{eq:41}
  (g^{ij})=
  \begin{pmatrix}
    0 & 0 & 1\\ 0 & 1 & -a\\ 1 & -a & 2b + a^2
  \end{pmatrix}
\end{equation}
has the inverse matrix $g=\bar{g}^{(-1)}$ that is a Monge metric. It can be
written in the form
\begin{equation}
  \label{eq:42}
  g = -2bda^2 + 2a \,da\,db + 2\,da\,dc +db^2.
\end{equation}
Now, it is easy to realize that the above metric represents the equation of a
quadratic line complex in the space of lines of the projective space
$\mathbb{P}^3$. Indeed, considering two infinitesimally close points in a
$4$-dimensional projective space $P=[u^0,u^1,u^2,u^3]$ and
$P+dP=(u^0+du^0,u^1+du^1,u^2+du^2,u^3+du^3)$, the minors of the matrix
$(P,P+dP)$ turn out to be of the form $p^{ij}=u^idu^j - u^jdu^i$ (note that
$u^0=1$ and $du^0=0$ when passing to the affine coordinates $a=u^1$, $b=u^2$,
$c=u^3$). Such forms are the \emph{Lie form of Pl\"ucker coordinates}. We
recall that the Pl\"ucker coordinates characterize lines in $\mathbb{P}^3$
modulo the further Pl\"ucker relations
$p^{ij}p^{hk}+p^{ih}p^{kj} + p^{ik}p^{jh}=0$. We can rewrite $g$ as
\begin{equation}
  \label{eq:43}
  g = 2(a\,db-b\,da)\,da + 2 da\,dc + db^2.
\end{equation}
The above metric turns out to be a quadratic expression in the Lie form of
Pl\"ucker coordinates; the corresponding quadratic line complex is given by the
system
\begin{equation}
  \label{eq:44}
  2p^{12}p^{01} + 2p^{01}p^{03} + (p^{02})^2=0, \qquad
  p^{01}p^{23}+p^{02}p^{31} + p^{03}p^{12}=0.
\end{equation}
The line congruence corresponding to the system~\eqref{eq:6} is an
$n$-parameter family of lines in $\mathbb{P}^{n+1}$. In homogeneous coordinates
$[y^1,\ldots,y^{n+2}]$ it has the general form $y^i=u^iy^{n+1}+V^iy^{n+2}$
\cite{agafonov01:_system_templ,agafonov96:_system}. The lines of the congruence
pass through the points $y^i=u^i$, $y^{n+1}=1$, $y^{n+2}=0$ and $y^i=V^i$,
$y^{n+1}=0$, $y^{n+2}=1$, respectively. The corresponding Pl\"ucker coordinates
are the minors of the matrix
\begin{equation}
  \label{eq:2}
  \begin{pmatrix}
    u^1 & \cdots & u^n & 1 & 0\\
    V^1 & \cdots & V^n & 1 & 0
  \end{pmatrix}
\end{equation}
The line congruence is linear if there are $n$ linear relations between the
Pl\"ucker coordinates; these are $n$ linear line complexes. As it was proved in
\cite{FPV17:_system_cl}, every first-order quasilinear system of PDEs admitting
a third-order HHO is associated with a linear line congruence. In the WDVV
example~\eqref{eq:6} the linear line congruence takes the form
\begin{equation}
  \label{eq:45}
  y^1 = ay^{4} + by^{5},\quad y^2 = by^{4} + cy^{5},\quad
  y^3 = cy^{4} + (b^2-ac)y^{5}
\end{equation}
 
The above constructions have a general validity: each time that a third-order
HHO in the form \eqref{eq:18} is found for a system of conservation laws (as we
will do many times in the paper) then one can construct the corresponding
algebraic varieties.

Then, it turns out that the bi-Hamiltonian pairs and the systems of first-order
PDEs are invariant with respect to \emph{projective reciprocal
  transformations}. These are non-local (or non-holonomic) transformations of
the independent variables of the form
\begin{equation}
  \label{recip}
  \begin{array}{c}
    d\tilde x=(a_iu^i+a)dx+(a_iV^i+b)dt, \\
    d\tilde t=(b_iu^i+c)dx+(b_iV^i+d)dt,
  \end{array}
\end{equation}
which, together with the affine transformation of dependent variables, generate
the projective action on the linear line congruence and the quadratic line
complex. More in detail, the above transformation can be factorized in a
sequence of transformations $R_1 \circ E\circ R_2$ where $E$ is just the
exchange of $t$ and $x$ and $R_i$ are transformations of the form
\begin{equation}
  \label{eq:36}
   d\tilde x=(a_iu^i+a)dx+(a_iV^i+b)dt,\qquad d\tilde{t}=dt,
\end{equation}
where the dependent variables undergo a projective transformation
$\tilde{u}^i=(A^i_j u^i + A^i_0)/(a_iu^i+a)$ (see \cite{FPV17:_system_cl} for
the definition). Hence, the whole bi-Hamiltonian WDVV hierarchy becomes a
projective-geometric object (in known cases). The above conjecture can
be rephrased as follows:

\begin{conjecture*}
  To every WDVV system there are associated a quadratic line complex and a
  linear line congruence.
\end{conjecture*}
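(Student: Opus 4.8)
The plan is to reduce the statement to a single existence claim: that every WDVV system, written in the first-order quasilinear form~\eqref{eq:15}, admits a third-order homogeneous Hamiltonian operator in the canonical form~\eqref{eq:18}. Indeed, by the results of \cite{FPV14,FPV17:_system_cl} recalled above, the data of such an operator is tantamount to the data of a quadratic line complex: its leading coefficient $h^{ij}$ has for inverse a Monge metric $g_{ij}$, and rewriting $g_{ij}$ as a quadratic expression in the Lie form of Pl\"ucker coordinates $p^{ij}=u^idu^j-u^jdu^i$ yields the defining equations of the complex, exactly as in the passage from~\eqref{eq:42} to~\eqref{eq:44}. Moreover, by \cite{FPV17:_system_cl} a first-order system carrying such an operator is automatically associated with a linear line congruence, read off from the velocity matrix $V^i_j$ through the minors of~\eqref{eq:2}. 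Thus both projective varieties are produced simultaneously once the third-order operator is in hand, and the conjecture collapses to: every WDVV system admits a third-order HHO --- which is the geometric reformulation of (part of) the first conjecture.

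To carry this out I would first exploit the invariance group of the WDVV problem to reduce $\eta$ to one of its canonical forms $\eta^{(1)}$ or $\eta^{(2)}$, so that it suffices to treat these representatives; the projective reciprocal transformations~\eqref{recip}, under which the whole bi-Hamiltonian structure is invariant, then transport the operator to the entire orbit. Next, for each canonical $\eta$ I would write the associativity equation~\eqref{eq:5} as a first-order conservative system by introducing the third derivatives of $F$ as field variables and taking the cross-differentiation compatibility conditions, generalizing the passage~\eqref{eq:13}$\to$\eqref{eq:6}. On the resulting system I would posit an operator of the shape~\eqref{eq:18} with undetermined coefficients $h^{ij}$ and $c^{ij}_k$ and impose the Hamiltonianity conditions of \cite{FPV14} --- skew-adjointness together with the vanishing of the Schouten bracket $[A_2,A_2]=0$ --- which become an algebraic-differential system for those coefficients whose compatibility with the associativity relations is what must be demonstrated.

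The hard part will be precisely this last step: producing a solution of the Hamiltonianity conditions \emph{uniformly in} $N$. So far the operator has only been exhibited case by case ($N=3,4,5$) through computer algebra, and no structural pattern relating $h^{ij},c^{ij}_k$ to $F$ across dimensions has been isolated. A dimension-independent proof would therefore require extracting such a pattern from the intrinsic data of the WDVV system --- most plausibly by building the Monge metric directly out of the associative multiplication $c^\gamma_{\alpha\beta}$ and the flat metric $\eta$ of the underlying Frobenius structure, and then checking that the resulting quadratic form automatically satisfies the Pl\"ucker-type and curvature constraints that characterize a genuine quadratic line complex. I expect the associativity condition itself to be what forces these constraints, but converting that expectation into an explicit identity valid for every $N$ is the central obstacle, and is why the result remains conjectural beyond the verified low-dimensional cases.
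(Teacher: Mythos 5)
Your proposal matches the paper's own reasoning: the statement is precisely the geometric rephrasing of the conjecture that every first-order WDVV system admits a third-order HHO in canonical form~\eqref{eq:18}, with the quadratic line complex read off from the Monge metric $h_{ij}$ (as in the passage from~\eqref{eq:42} to~\eqref{eq:44}) and the linear line congruence from the velocities via~\eqref{eq:2} and the results of \cite{FPV17:_system_cl}. Like the paper, you correctly identify that the only missing ingredient is a dimension-independent existence proof for the third-order operator --- verified here only for $N=3$ (all $\eta$, via Theorem~\ref{th:invariance} and the canonical forms) and for particular canonical forms when $N=4,5$ --- so the statement remains, as stated, a conjecture.
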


This fact might have an (at the moment unpredictable) impact in the
applications of WDVV equations. The projective group, in its realization as a
group of distinguished reciprocal transformations, is larger than the
invariance group of WDVV equations (which is the group of linear
transformations leaving $\pd{}{t^1}$ invariant, see
Section~\ref{sec:invar-wdvv-equat}) and its implications in the search for
solutions of WDVV equations is still to be understood. Consequences in
projective and enumerative geometry are not unlikely. See
Section~\ref{sec:projectivegeom} for details.

The paper is structured as follows. Section \ref{sec:preliminaries} describes
the pre-requisites on homogeneous Hamiltonian operators. In
Section~\ref{sec:invar-wdvv-equat} the invariance of bi-Hamiltonian pairs with
respect to invariance transformations of WDVV equations is proved. In
Section~\ref{sec:bi-hamilt-form} bi-Hamiltonian formalism for all normal forms
of WDVV equations in the case $N=3$ is
provided. Section~\ref{sec:wdvv-equations-n=4} considers the problem of finding
Hamiltonian structures for WDVV in higher dimensions. The concluding
Section~\ref{sec:projectivegeom} discusses the projective-geometric aspects of
the results obtained so far.

The calculations have been done by means of computer algebra systems. In
particular, Schouten brackets involving nonlocal operators have been calculated
by the Reduce package CDE, and checked by the Maple package
\texttt{jacobi.mpl}; both packages are described in
\cite{m.20:_weakl_poiss,vitolo17:_hamil_pdes} (see also \cite{KVV17}).  Further
calculations have been done in Reduce (finding first-order nonlocal operators
when $N=3$, and finding third-order operators in the cases $N=3$, $N=4$) and in
Maple, also using the package \texttt{Jets} \cite{BMJets} (finding
third-order operators in the case $N=5$). The programs are available at a
\texttt{GitHub} repository~\cite{vasicek21:_wdvv_hamil}.

\section{Preliminaries: Hamiltonian operators}
\label{sec:preliminaries}

In this paper we will look for Hamiltonian operators for quasilinear systems of
first-order PDEs that are generated by WDVV equations. Known examples suggest
that these can be homogeneous operators of first and third order. Let us
describe such classes more in detail.

First order local homogeneous operators \eqref{eq:17} have already been
described in the Introduction. We will need their nonlocal generalization
\eqref{eq:371}. Such operators were introduced and studied by Ferapontov (see
\cite{F95:_nl_ho}). We will always assume that the leading coefficient is a
non-degenerate matrix: $\det(g^{ij})\neq 0$ (we set
$(g_{ij}) = (g^{ij})^{-1}$). It is well known that the Hamiltonian property is
equivalent to the following conditions: the symmetry of $g^{ij}$, the fact that
$\Gamma^j_{ik} = - g_{ip}\Gamma^{pj}_k$ are the Christoffel symbols of $g_{ij}$
(interpreted as a pseudo-Riemannian metric), and the identities:
\begin{subequations}\label{eq:12}
  \begin{gather}\label{eq:64}
    g^{ik}w^j_{\alpha k} = g^{jk}w^i_{\alpha k}, \\ \label{eq:65} \nabla_k
    w^i_{\alpha j} = \nabla_j w^i_{\alpha k},
    \\
    [w_\alpha,w_\beta] = 0, \\ \label{eq:63} R^{ij}_{kl} =
      c^{\alpha\beta}\Big( w^i_{\alpha k}w^j_{\beta l} - w^j_{\alpha
        k}w^i_{\beta l}\Big).
  \end{gather}
\end{subequations}
Here, $\nabla$ is the Levi-Civita connection of $g_{ij}$,
$R^{ij}_{kl}=g^{is}R^j_{skl}$ (we follow the sign conventions of
\cite{dubrovin98:_flat_froben}), the bracket $[w_\alpha,w_\beta]$ is the usual
commutator of the matrices $w_\alpha = w^i_{\alpha k}$ and
$w_\beta=w^i_{\beta k}$. If the operator is local, then the conditions reduce
to those of the local operators~\eqref{eq:171}.

Third-order homogeneous Hamiltonian operators are much more complicated in
general. However, in the canonical form \eqref{eq:18} (again, we will assume
that the leading coefficient is non-degenerate, $\det(h^{ij})\neq 0$, and we set
$h_{ij}=(h^{ij})^{-1}$) the Hamiltonian property of $A_2$ implies that
$c^{ij}_k$ being given by
\begin{equation}\label{eq:19}
  c_{skm}=\frac{1}{3}(h_{sm,k} - h_{sk,m}),
\end{equation}
where $c_{ijk}=h_{iq}h_{jp}c_{k}^{pq}$, so that the leading coefficient
determines the operator \cite{FPV14}. The Jacobi property further implies that
\cite{FPV14}
\begin{align}
  & h_{mk,s} + h_{ks,m} + h_{ms,k}=0, \label{eq:20} \\
  & c_{msk,l}= - h^{pq}c_{pml}c_{qsk}. \label{eq:21}
\end{align}%
The equation~\eqref{eq:20} is equivalent to the fact that $h_{ij}$ is the Monge
form, or Monge metric, of a quadratic line complex, a distinguished family of
projective varieties. The projective properties of third-order HHOs will be
discussed in Section~\ref{sec:projectivegeom}.

It is important to remark that $h_{ij}$ turn out to be second degree
polynomials with respect to the field variables, under the further algebraic
constraints \eqref{eq:20}.  A complete classification of operators in the
form~\eqref{eq:18} is given in \cite{FPV16,FPV14} for a number of components
$n\leq 4$. The classification uses the projective invariance of third-order
homogeneous operators with respect to reciprocal transformations of the
form~\eqref{eq:36}.

We need a way to find Hamiltonian operators for WDVV systems. This is provided
by the theory of differential coverings
\cite{KerstenKrasilshchikVerbovetsky:HOpC}. In our particular case, it is known
\cite{tsarev85:_poiss_hamil} (but see also \cite{vergallo20:_homog_hamil}) that
a necessary condition for $A_1$ (both in the local and non-local case) to be
the Hamiltonian operator of a quasilinear system of first-order PDEs
\eqref{eq:15} is:
\begin{equation} \label{eq:53} g^{ik}V^j_k = g^{jk}V^i_k,\quad \nabla^iV^j_k =
  \nabla^j V^i_k,
\end{equation}
where $V^i_k = \pd{V^i}{u^k}$ and $\nabla^j$ is the Levi-Civita connection of
$g_{ij}$.  In the case of third-order operators it was recently found
\cite{FPV17:_system_cl} that the compatibility conditions between a third-order
Hamiltonian operators and a quasilinear system of first-order conservation laws
\eqref{eq:15} are:
\begin{subequations}
  \label{V}
  \begin{align}\label{e1}
    & h_{im}V^{m}_{j}=h_{jm}V^m_{i},\\
    \label{e3}
    & c_{mkl}V^m_{i}+c_{mik}V^m_{l}+c_{mli}V^m_{k}=0,\\
    \label{e2}
    &h_{ks}V^k_{ij}=c_{smj}V^m_{i} + c_{smi}V^{m}_{j}.
  \end{align}
\end{subequations}
It is interesting to observe that the conditions \eqref{eq:53} might be
relatively difficult to solve for the metric $g^{ij}$, while the system of
compatibility conditions for $h^{ij}$, being expressed in lower indices, is
indeed a linear algebraic system with respect to the coefficients of the second
degree polynomials $h_{ij}$, which is easy to solve.

\section{WDVV equations as systems of
  \texorpdfstring{\\}{} conservation laws}
\label{sec:invar-wdvv-equat}

As we have seen in the Introduction, the WDVV equations~\eqref{eq:5} in the
unknown function $f=f(t^2,\ldots,t^N)$~\eqref{eq:7} are presented as a system
of conservation laws~\eqref{eq:15} by introducing new dependent variables.  The
general algorithm for the transformation was first given in
\cite{mokhov95:_sympl_poiss} in the case $N=3$ and then generalized for
arbitrary values of $N$ in \cite{ferapontov96:_hamil}, the details are exposed
below.
\begin{algorithm*}
  \indent\par
  \begin{enumerate}
  \item Choose one distinguished independent variable $t^i$, $i>1$ (for example
    $t^2$), and all third-order derivatives of $f$ that contain at least one
    instance of $t^2$; call them $u^1 = f_{t^2t^2t^2}$, $u^2 = f_{t^2t^2t^3}$,
    \dots, $u^n = f_{t^2t^Nt^N}$. These are new dependent variables, and
    $n=N(N-1)/2$.
  \item Choose another independent variable $t^j\neq t^2$, $j>1$ (for example
    $t^3$), and, for any $u^i$, find $u^i_{t^3}$ as the $t^2$-derivative of an
    expression $V^i$:
    \begin{equation}
      u^i_{t^3} = V^i(\mathbf{u})_{t^2}.\label{eq:31}
    \end{equation}
    There are two possibilities:
    \begin{enumerate}
    \item either $V^i(\mathbf{u})$ is one of the coordinates $u^k$, with
      $k\neq i$;
    \item $V^i$ is a third-order derivative of $f$ which is not one of the
      $u^k$.  In this case, $V^i$ must be expressed by means of one of the
      equations of the WDVV system. This is always possible due to the
      structure of the WDVV system.
    \end{enumerate}
  \end{enumerate}
\end{algorithm*}

There are equations in the WDVV system which depend on variables that are not
$t^2$ or $t^3$ derivatives; such equations shall be discarded in the above
construction. However, changing the two distinguished independent variables,
and using other equations in the WDVV system, one obtains $N-2$ distinct
commuting systems of conservation laws with the same structure as above.

In general, first-order WDVV-systems for a fixed (but arbitrary) $N$ and a
fixed choice of $\eta$ are provided by the Algorithm as $N-2$ \emph{commuting}
two-dimensional quasilinear first-order systems of PDEs with $n=N(N-2)/2$
components \cite{ferapontov96:_hamil}.

\begin{definition}
  We say that a quasilinear first-order system of conservation
  laws~\eqref{eq:31} where $(u^i)$ are third-order derivatives of $f$ and the
  equations are compatibility conditions for a WDVV system to be a
  \emph{first-order WDVV system}.
\end{definition}

The purpose of this paper is to show that first-order WDVV systems admit a
third-order HHO in canonical form \eqref{eq:18} and a compatible first-order
local or non-local operator of the type~\eqref{eq:37} for low dimension $N$.

In particular, in the case $N=3$ we will be able to prove that \emph{all}
first-order WDVV systems, \emph{i.e.} for any choice of matrix $\eta$, admit a
bi-Hamiltonian pair as stated above. To this aim, the strategy is:
\begin{enumerate}
\item prove that the invariance group of the WDVV equations do not affect the
  `form' of a bi-Hamiltonian pair as above;
\item prove that there is a bi-Hamiltonian pair as above for each canonical
  form of the WDVV equations, \emph{i.e.}, the canonical form of the matrix
  $\eta_{ij}$.
\end{enumerate}

The invariance group of the WDVV equations with the quasihomogeneity constraint
is the group of linear transformations that preserve the direction of
$\pd{}{t^1}$:
\begin{equation}
  \label{eq:32}
  \tilde{t}^\alpha = P^\alpha_\beta t^\beta + Q^\alpha,\qquad
  \det(P^\alpha_\beta)\neq 0,\quad P^\alpha_1 = \delta^\alpha_1
\end{equation}
\cite{dubrovin06:_encyc_mathem_physic}. We have the transformation rules
\begin{equation}
  \label{eq:33}
  \pd{}{t^\alpha} = P^\beta_\alpha\pd{}{\tilde{t}^\beta},\qquad
  d{\tilde{t}^\alpha} = P^\alpha_\beta dt^\beta,
\end{equation}
which imply that the equation~\eqref{eq:5} is transformed into the same
equation with respect to the new coordinates $(\tilde{t}^\alpha)$ (of course,
one should change coordinates in $F$ and $\eta_{ij}$).

\begin{theorem}\label{th:invariance} Let $N=3$, and suppose that a WDVV system
  in first-order form $u^i_t=(V^i(\mathbf{u}))_x$ is bi-Hamiltonian with
  respect to a pair of compatible Hamiltonian operators $A_1$, $A_2$, where
  $A_1$ is a nonlocal first-order HHO~\eqref{eq:37} and $A_2$ is a local
  third-order HHO~\eqref{eq:18}.

  Then, the coordinate change~\eqref{eq:32} does not change the form of the
  bi-Hamiltonian pair $A_1$, $A_2$.
\end{theorem}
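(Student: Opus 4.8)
The plan is to track how the linear change~\eqref{eq:32} acts on the first-order WDVV system $u^i_t=(V^i(\mathbf{u}))_x$ and then check that this induced action carries each of the two operator classes into itself. First I would make the induced action explicit. Since $N=3$, the distinguished independent variables are $(x,t)=(t^2,t^3)$ and the fields $u^i$ are the third-order derivatives $f_{t^\alpha t^\beta t^\gamma}$ with $\alpha,\beta,\gamma\in\{2,3\}$ (so $u^1=f_{t^2t^2t^2}$, $u^2=f_{t^2t^2t^3}$, $u^3=f_{t^2t^3t^3}$). Because $P^\alpha_1=\delta^\alpha_1$, the new coordinates $\tilde t^2,\tilde t^3$ depend only on $t^2,t^3$, so the independent variables undergo a constant-coefficient linear change $(\tilde x,\tilde t)=(\tilde t^2,\tilde t^3)$. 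Using $\partial_{t^\alpha}=P^\beta_\alpha\partial_{\tilde t^\beta}$ from~\eqref{eq:33}, the third derivatives transform by
\[
  F_{t^\alpha t^\beta t^\gamma}=P^{\alpha'}_\alpha P^{\beta'}_\beta P^{\gamma'}_\gamma\,\tilde F_{\tilde t^{\alpha'}\tilde t^{\beta'}\tilde t^{\gamma'}},
\]
and, by the first defining property of $F$ (namely $F_{t^1t^\alpha t^\beta}=\eta_{\alpha\beta}$ constant), every term with one or more of $\alpha',\beta',\gamma'$ equal to $1$ contributes only a constant $\tilde\eta$. Hence the fields undergo a constant-coefficient \emph{affine} map $\tilde u=L\,u+m$, and \eqref{eq:32} factors as a point transformation of the dependent variables followed by a linear transformation of the independent variables.

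Second, I would dispose of the dependent-variable part. The affine change $\tilde u=Lu+m$ is a point transformation whose prolongation acts linearly on the $u_x$-jets, the shift $m$ dropping out of $u_x$. All the conditions that characterize the two operators---flatness of $g_{ij}$ together with \eqref{eq:53} and \eqref{eq:12} for $A_1$, and the Monge condition \eqref{eq:20} together with \eqref{eq:19} and \eqref{V} for $A_2$---are tensorial, being written through the Levi-Civita connection and curvature of $g$, respectively through $h$ and $c$ with lowered indices. They are therefore preserved, with $g^{ij}$, $h_{ij}$, the nonlocal frame $w_\alpha$ and $c^{ij}_k$ transforming as the tensors they are; in particular $h_{ij}$ remains a second-degree Monge polynomial, so $A_2$ keeps the canonical form~\eqref{eq:18} and $A_1$ keeps the Ferapontov form~\eqref{eq:37}.

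Third---and this is the crux---I would handle the linear change of independent variables. A general invertible linear transformation of $(x,t)$ factors as $R_1\circ E\circ R_2$, where $E$ exchanges $x$ and $t$ and each $R_i$ is a reciprocal transformation of the form~\eqref{eq:36} with the coefficients $a_i$ vanishing (the $u$-independent, purely linear degenerate case of~\eqref{recip}). For both classes the invariance under exactly these transformations is already established: third-order HHOs in canonical form are invariant under projective reciprocal transformations \cite{FPV14,FPV16,FPV17:_system_cl}, and the nonlocal Ferapontov operators are covariant under the same group \cite{FPV17:_system_cl}. Applying these results factor by factor shows that the induced $(x,t)$-change sends \eqref{eq:18} to \eqref{eq:18} and \eqref{eq:37} to \eqref{eq:37}, with the constants $\alpha,\beta,\gamma$ and the matrix $c^{\alpha\beta}$ at most rescaled. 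Since the vanishing of the Schouten bracket is itself invariant under these transformations, compatibility $[A_1,A_2]=0$ survives, so the whole pair retains its form.

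The main obstacle I expect is this last step for the \emph{nonlocal} first-order operator. One must verify that the specific tail structure in~\eqref{eq:37}, built from the velocity matrix $V^i_j$ and from $u^i_x$, is reproduced after the velocities are transformed by the exchange and reciprocal pieces: the transformed system carries a new velocity matrix, and $\partial_x^{-1}$ is now taken with respect to the new spatial variable. Confirming that the three-parameter family~\eqref{eq:37} is stable under this action---rather than leaking into a more general Ferapontov tail---is the delicate point, and is where I would lean most heavily on the covariance statements of \cite{FPV17:_system_cl}; the third-order part, by contrast, follows immediately from the cited classification.
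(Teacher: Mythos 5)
Your proposal is correct and follows essentially the same route as the paper: after observing that the shift of $t^1$ (the $T_2$ part) only contributes constants to the third-order derivatives, you factor the residual $GL(2)$ action on $(t^2,t^3)$ as $R_1\circ E\circ R_2$ with $E$ the exchange of $x$ and $t$ and $R_i$ degenerate reciprocal transformations of type~\eqref{eq:36}, exactly the decomposition~\eqref{eq:34},~\eqref{eq:52} used in the paper, and you invoke the same invariance results of \cite{FPV14,FPV17:_system_cl,pavlov95:_conser_hamil,ferapontov03:_recip_hamil} for each factor. The only cosmetic difference is that you separate the induced affine map on the fields from the change of independent variables, whereas the paper treats them together in~\eqref{eq:47}--\eqref{eq:48}; your closing caveat about the stability of the nonlocal tail is also where the paper leans on the cited references rather than giving details.
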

\begin{proof}
  The matrix $P=(P^\alpha_\beta)$ of the change of coordinates can be
  factorized as
  \begin{gather}
    \label{eq:34}
    P= T_1\cdot T_2,\qquad \text{where}
    \\
    \label{eq:56}
    P=
    \begin{pmatrix}
      1 & P^1_2 & P^1_3
      \\
      0 & P^2_2 & P^2_3
      \\
      0 & P^3_2 & P^3_3
    \end{pmatrix},\quad T_1=
    \begin{pmatrix}
      1 & 0 & 0
      \\
      0 & P^2_2 & P^2_3
      \\
      0 & P^3_4 & P^3_3
    \end{pmatrix},\quad T_2=
    \begin{pmatrix}
      1 & P^1_2 & P^1_3
      \\
      0 & 1 & 0
      \\
      0 & 0 & 1
    \end{pmatrix}.
  \end{gather}
  The matrix $T_1$ can be further factorized as
  \begin{gather}\label{eq:52}
    T_1 = R_1\cdot E\cdot R_2,\qquad\text{where}
    \\
    \label{eq:55}
    R_1 =
    \begin{pmatrix}
      1 & 0 & 0
      \\
      0 & \alpha & \beta
      \\
      0 & 0 & 1
    \end{pmatrix},\quad E=
    \begin{pmatrix}
      1 & 0 & 0
      \\
      0 & 0 & 1
      \\
      0 & 1 & 0
    \end{pmatrix},\quad R_2=
    \begin{pmatrix}
      1 & 0 & 0
      \\
      0 & \gamma & \delta
      \\
      0 & 0 & 1
    \end{pmatrix}.
  \end{gather}
  when $P^3_2\neq 0$ (when it is zero no factorization is needed).

  Let us now choose $t=t^3$ and $x=t^2$. When $N=3$ there is only one WDVV
  equation, from which the above Algorithm yields the following system:
  \begin{equation}
    \label{eq:50}
    \begin{split}
      & u^1_t = u^2_x,\\
      & u^2_t = u^3_x,\\
      & u^3_t = \phi(\mathbf{u})_x,
    \end{split}
  \end{equation}
  where $\phi$ is a rational function of the field variables.

  The transformation $T_2$ does not change third-order derivatives of $f$,
  hence it does not affect the system~\eqref{eq:50}.

  The transformation $R_1$ (equivalently, $R_2$) has the effect of a reciprocal
  transformation that preserves the coordinate $t$, namely
  \begin{equation}
    \label{eq:47}
    d\tilde{x} = P^2_2 dx + P^2_3 dt = \alpha dx + \beta dt,
    \qquad d\tilde{t} = dt.
  \end{equation}
  Such transformations are proved to preserve the canonical form of a
  third-order HHO \cite{FPV14} and the locality (or the non-local form) of a
  first-order HHO \cite{pavlov95:_conser_hamil,ferapontov03:_recip_hamil}.  At
  the same time, the third-order derivatives $u^1=f_{xxx}$, $u^2=f_{xxt}$,
  $u^3 = f_{xtt}$ undergo the affine transformation:
  \begin{equation}
    \label{eq:48}
    \begin{split}
      &u^1 = \alpha^3 \tilde{u}^1,\\
      &u^2 = \alpha^2\beta\tilde{u}^1 + \alpha^2\tilde{u}^2,\\
      &u^3 = \alpha\beta^2\tilde{u}^1 + 2\alpha\beta\tilde{u}^2 +
      \alpha\tilde{u}^3,\\
    \end{split}
  \end{equation}
  where $\tilde{u}^1=f_{\tilde{x}\tilde{x}\tilde{x}}$,
  $\tilde{u}^2=f_{\tilde{x}\tilde{x}\tilde{t}}$,
  $\tilde{u}^3 = f_{\tilde{x}\tilde{t}\tilde{t}}$. Again,that does not modify the
  structure of the bi-Hamiltonian pair.

  The transformation $E$ is just an exchange of the independent variables. It
  preserves both the canonical form of the third-order HHO
  \cite{FPV17:_system_cl} and the locality (or the non-local form) of the
  first-order HHO. This completes the proof.
\end{proof}

\begin{remark}
  In the case $N=3$ after a change of coordinates of the type $R_1$ we obtain a
  new WDVV equation. Hence, we can construct a new quasilinear first-order WDVV
  system:
  \begin{equation}
    \label{eq:57}
    \bar{u}^i_{\bar{t}}=(\bar{V}^i(\bar{\mathbf{u}}))_{\bar{x}}
  \end{equation}
  using the above Algorithm. However, it can be proved that in general there
  does not exist an affine transformation that brings the new system into the
  system
  $\tilde{u}^i_{\bar{t}}=(\tilde{V}^i(\tilde{\mathbf{u}}))_{\tilde{x}}$. Later,
  the relation between the two systems will be clarified.
\end{remark}

\begin{remark}
  In the case $N=3$ the transformation $E$ brings the system~\eqref{eq:50} into
  the system
  \begin{equation}
    \label{eq:51}
    \tilde{u}^1_{\tilde{t}} = \phi(\tilde{\mathbf{u}})_{\tilde{x}},\quad
    \tilde{u}^2_{\tilde{t}} =\tilde{u}^1_{\tilde{x}}\quad,
    \tilde{u}^3_{\tilde{t}} = \tilde{u}^2_{\tilde{x}};
  \end{equation}
  interchanging $\tilde{u}^1$ and $\tilde{u}^3$ bring the system in the same
  form as~\eqref{eq:50}.
\end{remark}

The case $N=4$ cannot be treated in the above way. Indeed, the invariance
transformation mix the independent variables, and the commuting systems are
transformed in a more complicated way. However, we will be able to show that
third-order HHO are present for first-order WDVV systems in canonical forms of
WDVV equations when $N=4$ and $N=5$.

\section{Bi-Hamiltonian formalism for WDVV equations,
  \texorpdfstring{$N=3$}{N=3}}
\label{sec:bi-hamilt-form}

Having Theorem~\ref{th:invariance} at hand, we can investigate $N=3$ WDVV
systems and look for bi-Hamiltonian pairs in their representation as
quasilinear systems of first-order PDEs. If we prove that if the WDVV system
obtained by each canonical form of the matrix $\eta_{ij}$ is endowed with a
bi-Hamiltonian pair of the type that we discussed in the Introduction, then
this will be true for an arbitrary matrix $\eta_{ij}$.

We will also discuss bi-Hamiltonian pairs for the canonical forms in
\cite{mokhov18:_class_hamil} and for one significant example which comes from
centroaffine geometry~\cite{ferapontov04:_hyper}.

While we will look for third-order operators in the canonical form
\eqref{eq:18}, we shall explain the reason for choosing the ansatz
\eqref{eq:37} for the first-order operators.

Indeed, computational experiments show that Dubrovin's canonical form
$\eta^{(2)}$ does not admit a local first-order HHO. Then, the form
\eqref{eq:37} is the only possibility in the class of Ferapontov operators: the
vectors that multiply $\partial_x^{-1}$ must be (generalized) commuting
symmetries of the quasilinear systems of PDEs in view of the Hamiltonian
property of $A_1$ \cite{F95:_nl_ho} (see \eqref{eq:12}). Now, first-order WDVV
systems are non-diagonalizable, as we will discuss in
Section~\ref{sec:projectivegeom}, and non-diagonalizable systems \emph{with a
  low number of components} have only two such symmetries\footnote{There is
  only experimental evidence of this fact for $n=3$, $4$, $5$ (E.V. Ferapontov,
  private communication).}, namely $\varphi_1 = u^i_x\pd{}{u^i}$ and
$\varphi_2 = (V^i)_x\pd{}{u^i}$, which correspond to $t$ and $x$ translational
symmetries.  This means that the Hamiltonian property for an operator $A_1$ of
the form \eqref{eq:37} is equivalent to the conditions~\eqref{eq:64},
\eqref{eq:65} and
\begin{equation}
 \label{eq:88}
  \begin{split}
    R^{ij}_{kl} =& \alpha\big( V^i_k V^j_l - V^i_l V^j_k \big)
    \\
    &+\beta\big( V^i_k \delta^j_l - V^j_k \delta^i_l - V^i_l \delta^j_k +
    V^j_l \delta^i_k \big)
    +\gamma(\delta^i_k\delta^j_l - \delta^i_l\delta^j_k)
  \end{split}
\end{equation}
(obviously, the above two symmetries commute).  So, finding operators
\eqref{eq:37} amounts at finding the metric $g^{ij}$ and the three constants
$\alpha$, $\beta$, $\gamma$.

To this end, we recall a theorem in \cite{bogoyavlenskij96:_neces_hamil} that
states that, for non-diagonaliz\-able hydrodynamic-type systems in $n=3$ unknown
functions, the metric of a first-order Hamiltonian operator for the system
shall be proportional to a contraction of the square of the Haantjes tensor:
\begin{equation}
  \label{eq:3}
  g_{ij} = f\, H^\alpha_{i\beta}H^{\beta}_{j\alpha}, \qquad f=f(\mathbf{u}).
\end{equation}
See \cite[eq.\ 2.2 and 2.4]{bogoyavlenskij96:_neces_hamil} for a coordinate
expression of the Nijenhuis and the Haantjes tensors.

Summarizing, in the case $n=3$ there are one unknown function and three unknown
constants to be determined in order to find a first-order operator. We will use
the above equations in order to determine, by computer algebra, the first-order
operator $A_1$ for first-order WDVV systems.

\subsection{B.A. Dubrovin's normal forms of
  \texorpdfstring{$\eta$}{eta}}

We start with a result that follows from known calculations.

\begin{theorem}\label{th:dubeta1}
  Each first-order WDVV system in the orbit of $\eta^{(1)}$ under the action of
  the transformations~\eqref{eq:32} is endowed by a compatible pair of
  \emph{local} homogeneous Hamiltonian operators $A_1$ as in~\eqref{eq:17} and
  $A_2$ as in \eqref{eq:18}.
\end{theorem}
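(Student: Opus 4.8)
The plan is to reduce the statement to a single canonical representative and then invoke an already-known computation. By Theorem~\ref{th:invariance}, the coordinate changes~\eqref{eq:32} preserve the form of a bi-Hamiltonian pair consisting of a first-order HHO and a third-order HHO, including the locality of the first-order operator. Consequently, it suffices to exhibit such a pair for one matrix in the orbit of $\eta^{(1)}$; the most convenient choice is Dubrovin's normal form~\eqref{eq:8}, for which the WDVV equation is~\eqref{eq:13} and the associated first-order system is~\eqref{eq:6}.

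For this representative the bi-Hamiltonian pair is already available in the literature~\cite{FGMN97}: the operators $A_1$ and $A_2$ are given explicitly by~\eqref{eq:39} and~\eqref{eq:40}. First I would record that $A_1$ in~\eqref{eq:39} is a \emph{local} first-order HHO of the type~\eqref{eq:17}: reading off its leading symbol yields the contravariant metric $g^{ij}$ and the coefficients $\Gamma^{ij}_k$, and one checks that $g_{ij}=(g^{ij})^{-1}$ is a flat metric whose Christoffel symbols satisfy $\Gamma^i_{jk}=-g_{js}\Gamma^{si}_k$, as required in~\eqref{eq:17}. Likewise $A_2$ in~\eqref{eq:40} is manifestly of the canonical third-order form~\eqref{eq:18}, with leading coefficient~\eqref{eq:41} whose inverse is the Monge metric~\eqref{eq:42}; the coefficients $c^{ij}_k$ are then forced by~\eqref{eq:19}.

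The remaining content is to confirm that $A_1$, $A_2$ are genuinely Hamiltonian and mutually compatible, i.e.\ $A_1^*=-A_1$, $A_2^*=-A_2$, and $[A_1,A_1]=[A_2,A_2]=[A_1,A_2]=0$. For $A_2$ this amounts to the algebraic constraints~\eqref{eq:20}--\eqref{eq:21} on the Monge metric, and for $A_1$ to flatness of $g_{ij}$ together with the compatibility conditions~\eqref{eq:53} with the velocities of~\eqref{eq:6}; both were established in~\cite{FGMN97} and can be re-verified directly by computer algebra. Compatibility of the pencil $A_1+\lambda A_2$ is then the final Schouten-bracket check.

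The computation is entirely explicit, so I do not anticipate a genuine obstacle: the only delicate point is bookkeeping, namely making sure that the operators~\eqref{eq:39}--\eqref{eq:40} are read in exactly the normalizations of~\eqref{eq:17} and~\eqref{eq:18}, and then citing Theorem~\ref{th:invariance} to transport the pair---together with the locality of $A_1$---across the whole orbit of $\eta^{(1)}$.
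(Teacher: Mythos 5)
Your proposal is correct and follows essentially the same route as the paper: the paper's proof likewise cites \cite{FGMN97} for the explicit pair \eqref{eq:39}--\eqref{eq:40} on the canonical representative and then invokes Theorem~\ref{th:invariance} to propagate it (locality included) across the orbit. The extra verifications you list (skew-adjointness, the constraints \eqref{eq:20}--\eqref{eq:21}, \eqref{eq:53}, and the Schouten brackets) are subsumed in the citation and not spelled out in the paper.
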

\begin{proof}
  The existence of a bi-Hamiltonian pair as in the statement was proved in
  \cite{FGMN97} (see also the Introduction). The fact that the bi-Hamiltonian
  pair propagates to each element of the orbit is a direct consequence of
  Theorem~\ref{th:invariance}.
\end{proof}

A completely new result holds for Dubrovin's second canonical form
\eqref{eq:10}. We stress that we could obtain this result only for recent
developments in the calculations of Schouten brackets for weakly nonlocal
operators \cite{casati19:_hamil,m.20:_weakl_poiss}.

\begin{theorem}\label{th:dubeta2}
  Each first-order WDVV system in the orbit of $\eta^{(2)}$ under the action of
  the transformations~\eqref{eq:32} is endowed by a compatible pair of
  homogeneous Hamiltonian operators $A_1$, a \emph{nonlocal} operator as
  in~\eqref{eq:37}, and $A_2$ as in \eqref{eq:18}.
\end{theorem}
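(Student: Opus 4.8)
The plan is to prove Theorem~\ref{th:dubeta2} by an explicit construction that mirrors the strategy already laid out in the paper, reducing the whole orbit of $\eta^{(2)}$ to a single representative via Theorem~\ref{th:invariance}. First I would fix the canonical form $\eta=\eta^{(2)}$ (for $N=3$, with $\mu\neq 0$), write the function $F$ as in~\eqref{eq:10}, and run the Algorithm of Section~\ref{sec:invar-wdvv-equat} with $x=t^2$, $t=t^3$ to obtain the explicit first-order WDVV system $u^i_t=(V^i(\mathbf{u}))_x$ in the form~\eqref{eq:50}, recording the rational velocity matrix $V^i_j=\partial V^i/\partial u^j$. By Theorem~\ref{th:invariance}, once the bi-Hamiltonian pair is exhibited for this one representative, it propagates (without changing its form) to every system in the orbit under~\eqref{eq:32}, so it suffices to treat the representative.

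The construction of $A_2$ comes first, since it is the easier half. I would posit $A_2$ in the canonical form~\eqref{eq:18}, with $h_{ij}$ a symmetric second-degree polynomial in $\mathbf{u}$, and impose the compatibility system~\eqref{V} together with the Monge constraint~\eqref{eq:20}. As the paper emphasizes at the end of Section~\ref{sec:preliminaries}, these are \emph{linear algebraic} equations in the coefficients of $h_{ij}$, so solving for $h_{ij}$ (and hence for $c^{ij}_k$ via~\eqref{eq:19}) is routine computer algebra; I would then verify the remaining Jacobi identity~\eqref{eq:21} to confirm that $A_2$ is genuinely Hamiltonian.

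For $A_1$ I would follow the recipe justified in the opening of Section~\ref{sec:bi-hamilt-form}: since $\eta^{(2)}$ admits no \emph{local} first-order HHO, the operator must be nonlocal of Ferapontov type~\eqref{eq:37}, with the two tails built from the only commuting symmetries $\varphi_1=u^i_x\partial_{u^i}$ and $\varphi_2=V^i_x\partial_{u^i}$. Thus I would compute the metric $g_{ij}$ from the Haantjes-tensor contraction~\eqref{eq:3} of Bogoyavlenskij (valid here because $n=3$), fixing the conformal factor $f(\mathbf{u})$ and the three constants $\alpha,\beta,\gamma$ by imposing~\eqref{eq:64}, \eqref{eq:65} and the curvature condition~\eqref{eq:88}. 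This determines $A_1$ up to a short computation. The main obstacle is the final compatibility step: proving $[A_1,A_2]=0$, i.e.\ that the pencil $A_1+\lambda A_2$ is Hamiltonian. Because $A_1$ is \emph{nonlocal}, the Schouten bracket cannot be handled by the classical local machinery; I would therefore compute it using the recently developed theory and software for weakly nonlocal operators cited in the text (\cite{casati19:_hamil,m.20:_weakl_poiss}), cross-checking the Reduce/CDE result against \texttt{jacobi.mpl}. I expect the bookkeeping of the nonlocal tails in the bracket to be the genuinely delicate part, both conceptually and computationally, and the verification that the cross-bracket vanishes is where the real content of the theorem lies.
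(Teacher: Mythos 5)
Your proposal is correct and follows essentially the same route as the paper's own proof: reduce to the canonical representative of $\eta^{(2)}$ via the Algorithm and Theorem~\ref{th:invariance}, determine $A_2$ from the linear system~\eqref{V} for the Monge metric, determine $A_1$ via Bogoyavlenskij's ansatz~\eqref{eq:3} together with the Ferapontov form~\eqref{eq:37} built on the two translational symmetries, and verify $[A_1,A_2]=0$ by the weakly nonlocal Schouten-bracket software. The paper simply records the explicit outputs of these computations (the metrics $h_{ij}$, $g^{ij}$ and the constants $\alpha=-\mu^2$, $\beta=0$, $\gamma=\mu$), which your outline would reproduce.
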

\begin{proof}
  If $\eta_{11}\neq 0$, then we have the canonical form
\begin{equation}
  \label{eq:24}
  \eta =
  \begin{pmatrix}
    \mu & 0 & 1
    \\
    0 & 1 & 0
    \\
    1 & 0 & 0
  \end{pmatrix}
\end{equation}
with $\mu\neq 0$, to which it corresponds the equation
\begin{equation}
  \label{eq:25}
\mu f_{ttt}f_{xxt} - f_{ttt} + (f_{xxt})^2 - f_{xxx}f_{xtt} - \mu(f_{xtt})^2 = 0.
\end{equation}
The Algorithm yields the quasilinear first-order system
\begin{equation}
  \label{eq:26}
  \begin{split}
    &a_t = b_x,
    \\
    &b_t = c_x,
    \\
    &c_t = \left(\frac{ac - b^2 + \mu c^2}{\mu b - 1}\right)_x,
  \end{split}
\end{equation}
where we used the notation $a=f_{xxx}$, $b=f_{xxt}$,
$c=f_{xtt}$ for the sake of simplicity. We will adopt the same notation
throughout the rest of the Section.

The above system admits a third-order HHO $A_2$ which is completely
determined by the metric
\begin{equation}
    h_{ij}=
    \begin{pmatrix}
      b(\mu b -2) & (a+\mu c)(1-\mu b) & (\mu b-1)^2 \\
      (a+\mu c)(1-\mu b) & \mu(a+\mu c)^2+1 & \mu(a+\mu c)(1-\mu b) \\
      (\mu b-1)^2 & \mu(a+\mu c)(1-\mu b) & \mu(\mu b-1)^2%
    \end{pmatrix},
\end{equation}
and has the following form:
\begin{equation}\label{eq:27.1}
  A_2=
  \begin{pmatrix}
    -\mu\partial_x^3 & 0 & \partial_x^3 \\
    0 & \partial_x^3 & \partial_x^2 \frac{a+\mu c}{\mu b-1}\partial_x \\
    \partial_x^3 & \partial_x \frac{a + \mu c}{\mu b-1}\partial_x^2 & \frac{1}{2}(\partial_x^2 K \partial_x + \partial_x K \partial_x^2) \\
  \end{pmatrix},
\end{equation}
where 
\begin{equation}
K=\frac{(a+\mu c)^2+b(2-\mu b)}{(\mu b-1)^2}.\label{eq:62}
\end{equation}
We stress that the operator lies in the projective class $g^{(3)}$, according
to the classification in \cite{FPV16,FPV14}. Systems that possess a third-order
Hamiltonian operator have a non-local Hamiltonian that is specified
in~\cite{FPV17:_system_cl}.

Using the results from \cite{bogoyavlenskij96:_neces_hamil} (see the beginning
of the Section) we find that the system~\eqref{eq:26} has the first-order
operator of Ferapontov type $A_1$ defined by the metric

\footnotesize
\begin{equation}
  g^{ij}= 
  \begin{pmatrix}
    b^2\mu^2-a^2\mu-2b\mu-3 & a-ab\mu+bc\mu^2-c\mu & 2b-b^2\mu+c^2\mu^2 \cr 
    a-ab\mu+bc\mu^2-c\mu & 2b-b^2\mu+c^2\mu^2 & \frac{c( ac\mu^2 -2b^2\mu^2 +4b\mu +c^2\mu^3 -3)}{b\mu-1} \cr 
   2b-b^2\mu+c^2\mu^2  & \frac{c( ac\mu^2 -2b^2\mu^2 +4b\mu +c^2\mu^3 -3)}{b\mu-1} & \frac{\delta}{(b\mu-1)^2}
  \end{pmatrix},
\end{equation}
\normalsize where
\begin{multline*}
\delta=
a^2c^2\mu^2-2ab^2c\mu^2+4abc\mu+2ac^3\mu^3-4ac+b^4\mu^2-4b^3\mu-3b^2c^2\mu^3
\\
+4b^2+6bc^2\mu^2+c^4\mu^4-5c^2\mu
\end{multline*}
and the values of constants from (\ref{eq:37}) are
$\alpha=-\mu^2, \beta=0,\gamma=\mu$.

Using the results from \cite{CLV19} and the module developed in
\cite{m.20:_weakl_poiss} of the software package CDE we are able to prove the
compatibility of $A_1$ and $A_2$: the Schouten bracket $[A_1,A_2]$ vanish.

The fact that the bi-Hamiltonian pair propagates to each element of the orbit
is a direct consequence of Theorem~\ref{th:invariance}.
\end{proof}

\subsection{O.I. Mokhov and N.A. Pavlenko's normal forms of
  \texorpdfstring{$\eta$}{eta}}
\label{sec:o.i.-mokhov-n.a}

In a recent paper \cite{mokhov18:_class_hamil} Mokhov and Pavlenko classified
the WDVV equations \eqref{eq:5}, without the requirement of quasihomogeneity of
the solutions (item 3 in the Introduction), under transformations of the type
$T_1$ \eqref{eq:34}.  They came to the following four canonical forms:
\begin{fleqn}
  \begin{equation}
    \eta^1=
    \begin{pmatrix}
      0 & 0 & 1 \\
      0 & \lambda & 0 \\
      1 & 0 & \mu%
    \end{pmatrix},
    \ \lambda^2=1; \quad
    \eta^3=
    \begin{pmatrix}
      1 & 0 & 1 \\
      0 & 0 & 1 \\
      1 & 1 & 0%
    \end{pmatrix};
  \end{equation}
\begin{equation}
  \eta^2=
  \begin{pmatrix}
    1 & 0 & 1 \\
    0 & \lambda & 0 \\
    1 & 0 & \mu%
  \end{pmatrix},
  \ \lambda^2=1; \quad
  \eta^4=
  \begin{pmatrix}
    1 & 0 & 0 \\
    0 & \lambda & 0 \\
    0 & 0 & \mu %
  \end{pmatrix},\ \lambda^2=1, \ \mu^2=1.
\end{equation}
\end{fleqn}
According to Theorems~\ref{th:dubeta1} and \ref{th:dubeta2}, the above
examples can be rewritten as bi-Hamiltonian first-order WDVV systems.
Let us provide the first-order and third-order HHOs in all the above cases.

\paragraph{The case $\eta^1$.}
In this case in \cite{mokhov18:_class_hamil} it is provided a first-order local
HHO $A_1^1$. We look for a third-order one. The WDVV equations in first-order
form are of the type \eqref{eq:50}, where the function $\phi$ is defined by
\begin{equation}\label{eq:29}
  \phi=b^2-ac-\lambda\mu b +\mu^2.
\end{equation}
It is easy to show that the equations \eqref{V} have the unique solution
$h_{ij}$
\begin{equation}
    h_{ij}=
    \begin{pmatrix}
      -2b+\lambda\mu & a & 1 \\
      a & 1 & 0 \\
      1 & 0 & 0 %
    \end{pmatrix}.
\end{equation}
which is of type $g^{(5)}$ (as~\eqref{eq:6}) according to the classification
in~\cite{FPV14}. We will omit the corresponding operator $A_2^1$, as it can be
easily reconstructed.

The Schouten bracket $[A^1_1,A^1_2]$ turns out to be zero and thus the above
operators are compatible; the computation has been performed by means of the
Reduce package CDE, see \cite{KVV17,vitolo17:_hamil_pdes}.

\paragraph{The case $\eta^2$.}
For $\eta^2$ we have the function $\phi(a,b,c)$:
\begin{equation}
  \phi(a,b,c)=\frac{b^2-ac-\lambda c^2-\mu b^2+\mu ac-\mu\lambda b+\mu^2}{1-\lambda b},
\end{equation}
for an arbitrary real constant $\mu$.

In this case, the WDVV system in first-order form has a third-order HHO
which is completely determined by the metric of type $g^{(3)}$
\cite{FPV16,FPV14}:
\begin{equation}
  h_{ij}=
  \begin{pmatrix}
    \mu-b(2\lambda-b) & \frac{(\lambda-b)(\mu a-\lambda c -a)}{\mu-1} & \frac{-(\lambda-b)^2\lambda}{\mu-1} \\
    \frac{(\lambda-b)(\mu a-\lambda c -a)}{\mu-1} & \frac{(\mu - 1)^2(\lambda + a^2) - 2(\mu - 1)\lambda ac +  c^2}{(\mu-1)^2} &  \frac{-(\mu a-\lambda c-a)(\lambda -b)\lambda}{(\mu-1)^2} \\
   \frac{-(\lambda-b)^2\lambda}{\mu-1} & \frac{-(\mu a-\lambda c-a)(\lambda -b)\lambda}{(\mu-1)^2} & \frac{(\lambda-b)^2}{(\mu-1)^2} \\
  \end{pmatrix},
\end{equation}
Note that $\mu \neq 1$, as $\det \eta^2\neq 0$.

The WDVV system in first-order form also admits a non-local Hamiltonian
operator of Ferapontov type $A^2_1$ that is compatible with $A^2_2$. We present
the special case where $\mu=2$, as the general case has an expression that is
too big to be shown here. We have
\begin{equation}
  \label{eq:38}
  g^{ij}=\begin{pmatrix}
    2b\lambda-a^2\lambda -b^2-5  & a-ab\lambda-bc+c\lambda & 2b-b^2\lambda-c^2-2\lambda  \cr
    a-ab\lambda-bc+c\lambda & 2b-b^2\lambda-c^2-2\lambda & \frac{c(ac\lambda-2b^2\lambda+4b-c^2-\lambda)}{b-\lambda}  \cr
    2b-b^2\lambda-c^2-2\lambda & \frac{c(ac\lambda-2b^2\lambda+4b-c^2-\lambda)}{b-\lambda} &
    \frac{\delta}{(b-\lambda)^2},
  \end{pmatrix}
\end{equation}
where
$$ 
\delta=  \lambda(b^2(4b-3c^2)-2ac(2b-c^2)+8b-c^2) - (4b-c^2)(2b-c^2) - 4 - (ac-b^2)^2
$$
and the value of the constants in the non-local part is
$\alpha=1,\beta=0,\gamma=\lambda$.

\paragraph{The case $\eta^3$.}
In this case we have
\begin{equation}
  \phi(a,b,c)=\frac{b^2-ac+bc-2b+1}{a}.
\end{equation}
There exists a unique third-order HHO $A^3_2$, it is generated by the following
metric of type $g^{(4)}$ \cite{FPV16,FPV14}:
\begin{equation}
  h_{ij}=
  \begin{pmatrix}
    (1-b-c)(b+c-3) & a(b+c-2)+1 & a(b+c-2) \\
    a(b+c-2) +1 & -a^2 &  -a^2 \\
    a(b+c-2) & -a^2 & -a^2 \\
  \end{pmatrix}.
\end{equation}

We also have an operator $A_1^3$ of Ferapontov type (\ref{eq:37}). It is
determined by the metric
\begin{equation}
  g^{ij}=
  \begin{pmatrix}
   - a^2 - 2ab & 2a - ab - ac - b^2 & 4b - b^2 - 2bc - 3  \cr 
  2a - ab - ac - b^2 &   4b - b^2 - 2bc - 3 & \frac{4ac - ac^2 - b^3 - b^2c + 2b^2 - b}{a}  \cr 
  4b - b^2 - 2bc - 3 & \frac{4ac - ac^2 - b^3 - b^2c + 2b^2 - b}{a} &  \frac{\delta}{a}
  \end{pmatrix},
\end{equation}
where $\delta=ac^2 - 4ac - 2b^2c + 4b^2 - 2bc^2 + 8bc - 8b - 2c + 4$. The
nonlocal part of the operator is determined by $\alpha=0,\beta=1,\gamma=1$.

Furthermore, the Schouten bracket vanishes: $[A^3_1,A^3_2]=0$.

\paragraph{The case $\eta^4$.}
In this case the function $\phi(a,b,c)$
\begin{equation}
  \phi(a,b,c)=\frac{\mu b^2-\mu ac+\lambda c^2-1}{\lambda b}.
\end{equation}
There exists a third-order HHO which is in the class $g^{(3)}$
\cite{FPV16,FPV14}. It is determined by the metric
\begin{equation}
  h_{ij}=
  \begin{pmatrix}
    b^2+\mu & b\mu(\lambda c-\mu a) & -\mu\lambda b^2 \\
    b\mu(\lambda c-\mu a) & \lambda+a^2-\lambda c(2\mu a-\lambda c) &
    \lambda b(\mu a-\lambda c) \\
   -\mu\lambda b^2 & \lambda b(\mu a-\lambda c)) & b^2 \\
  \end{pmatrix}.
\end{equation}
Below we provide the metric $g^{ij}$ that defines $A_1^4$:
\begin{equation}
  g^{ij}= \frac{\lambda}{\mu}
  \begin{pmatrix}
    -a^2\mu-b^2\lambda-4\mu\lambda & -b( a\mu+c\lambda) &
    -b^2\mu - c^2\lambda - 1 \cr 
    -b( a\mu+c\lambda) & -b^2\mu - c^2\lambda - 1 &
    \frac{c( ac\mu -2b^2\mu -c^2\lambda + 1)}{b} \cr 
    -b^2\mu - c^2\lambda - 1 & \frac{c(ac\mu -2b^2\mu -c^2\lambda + 1)}{b}
    & \frac{\delta}{b^2}
  \end{pmatrix},
\end{equation}
where $\delta= 2ab^2c\lambda-a^2c^2\lambda +2ac^3\mu -2ac\mu\lambda
-b^4\lambda -3b^2c^2\mu -2b^2\mu\lambda -c^4\lambda +2c^2 - \lambda$
and the values of constants in (\ref{eq:37}) are
$\alpha=\mu,\beta=0, \gamma=\lambda$. Also in this case $[A^4_1,A^4_2]=0$ for
any $\lambda,\mu = \pm 1$.

\subsection{An example: equation of flat centroaffine metrics}

There is another significant example provided in \cite{D96} in the case $N=3$:
\begin{equation}\label{eq:28}
    \eta=
    \begin{pmatrix}
      1 & 0 & 0 \\
      0 & 0 & 1 \\
      0 & 1 & 0%
    \end{pmatrix}.
\end{equation}
This particular example has an interesting geometric interpretation as the
equation of flat centroaffine metrics for surfaces in
$\mathbb{R}^3$~\cite{ferapontov04:_hyper}. The WDVV equation takes the form
\begin{equation}\label{eq:1}
f_{xxx}f_{yyy} - f_{xxy}f_{xyy} = 1,
\end{equation}
and the system in first-order form reads:
\begin{equation}\label{eq:27}
  \begin{split}
    &a_t = b_x, \\
    &b_t = c_x, \\
    &c_t = \left(\frac{bc+1}{a}\right)_x.
  \end{split}%
\end{equation}
The above system admits a third-order HHO of type $g^{(4)}$ which is again
completely determined by a metric $h_{ij}$:
\begin{equation}
    h_{ij}=
    \begin{pmatrix}
      c^2 & -1 & -ac \\
      -1 & 0 & 0 \\
      -ac & 0 & a^2%
    \end{pmatrix}.
\end{equation}
and has the form:
\begin{equation}\label{eq:28.1}
  A_2= 
  \begin{pmatrix}
    0 & -\partial_x^3 & 0 \\
    -\partial_x^3 & 0 & -\partial_x^2\frac{c}{a}\partial_x \\
    0 & -\partial_x\frac{c}{a}\partial_x^2 &
    \frac{1}{2}(\partial_x^2\frac{1}{a^2}\partial_x
    + \partial_x\frac{1}{a^2}\partial_x^2) \\
  \end{pmatrix}.
\end{equation}

For this case we also have a first-order nonlocal operator $A_1$ which is
determined by metric $g^{ij}$:
\begin{equation}
  g^{ij}= 
  \begin{pmatrix}
   2ab & ac+b^2 & 2bc+3 \cr
   ac+b^2 & 2bc+3 & \frac{ac^2+b^2c+b}{a} \cr
   2bc+3 & \frac{ac^2+b^2c+b}{a} & \frac{2c(bc+1)}{a}
  \end{pmatrix},
\end{equation}
and its tail vector $\alpha=\gamma=0, \beta=-1$. The operator is compatible
with $A_2$: $[A_1,A_2]=0$. This turns \emph{the equation of flat centroaffine
metrics, in the form of a first-order quasilinear system of PDEs \eqref{eq:27},
into a  bi-Hamiltonian system.}

Example 3 in \cite{FPV16} is just another form of~\eqref{eq:1}, after
transforming $\eta^{(2)}$ into the identity matrix ($\mu=1$). Also this case is
bi-Hamiltonian by means of the third-order HHO in Example 3 and the first-order
nonlocal operator presented in \cite{m.20:_weakl_poiss}.

\begin{remark}
  Ferapontov operators of type \eqref{eq:37} can be transformed into local
  operators by a reciprocal transformation if~$\alpha=\gamma=0$
  \cite{ferapontov03:_recip_hamil}. The fact that in the above case the
  first-order operator is localizable does not contradict the fact that,
  according to \cite{mokhov18:_class_hamil}, it is \emph{not} localizable by
  equivalence transformations of WDVV. Indeed, in \cite{FM96:_equat_hamil} a
  transformation between \eqref{eq:13} and the above \eqref{eq:1} which can
  also be read as a reciprocal transformation between the first-order systems
  is provided. It should not be a difficult exercise to show that such a
  transformation brings the local first-order HHO of \eqref{eq:13} into the
  above non-local first-order HHO.
\end{remark}

\subsection{Remarks on the generic case}
\label{sec:remarks-generic-case}

A computation in the generic case is possible. In particular, if $N=3$ a
generic choice of the matrix $\eta$ leads to a first-order system of the
form~\eqref{eq:50}. We can prove (by computer) that if $\eta_{11}\neq 0$ the
generic metric of the third-order operator is of type $g^{(3)}$, while if
$\eta_{11}=0$ it is of type $g^{(4)}$ (see \cite{FPV14}).  Such computations
show that, at least in the case $N=3$, the third-order operators exist
independently, being the first-order system generated by the algorithm or as a
relation between the coordinates f involving the WDVV equation. This fact
suggests that a natural framework for characterizing the existence of a
third-order HHO might be that of \cite{Agafonov_2019}, where \emph{all}
conserved densities of the first-order system (in our case 5) are used to
identify the system with a projective variety.

It is interesting to observe that our examples do not cover the whole range of
Monge metrics that are classified in \cite{FPV14}, as the most generic types
$g^{(1)}$ and $g^{(2)}$. However, strictly speaking an occurrence of a metric
of the latter classes in first-order WDVV systems cannot be excluded.

It is interesting to observe that, when $N=3$, the first-order WDVV systems
produced by the Algorithm are always of the \emph{reducible}
type~\eqref{eq:48}, according with the terminology in
\cite{agafonov98:_linear}. In the same paper it is proved that linearly
degenerate reduced systems are semi-Hamiltonian. In our case we proved that
such systems admit a third-order HHO, and hence they are linearly degenerate
(see \cite{FPV17:_system_cl} and Section~\ref{sec:projectivegeom}), but we
prove that the systems are bi-Hamiltonian (which, of course, imply
semi-Hamiltonianity).

\section{WDVV equations in higher dimensions}
\label{sec:wdvv-equations-n=4}

We still have no results of invariance of the bi-Hamiltonian structure for
higher dimensions. However, there are strong indications that third-order HHOs
exist for both normal forms $\eta^{(1)}$ and $\eta^{(2)}$ and all values of
$N$, due to their existence for $N=4$ and $N=5$. In one case, the third-order
HHO is paired with a first-order local HHO and we have a bi-Hamiltonian
pair. We present the main results in this section.

We shall remark that at the moment there exists only one first-order operator
for $N=4$ and $\eta=\eta^{(1)}$: it was found in \cite{ferapontov96:_hamil}.
For $N=4$ and $\eta=\eta^{(2)}$ and $N=5$ we cannot exhibit any first-order
operator (local or non-local) because the properties that we used in the case
$N=3$ \cite{bogoyavlenskij96:_neces_hamil} do not hold in higher dimensions.
It is however reasonable to \emph{conjecture} that we will continue to have
first-order operators, local in the case $\eta=\eta^{(1)}$ and
nonlocal of type \eqref{eq:37} in the case $\eta=\eta^{(2)}$, and that they
will form bi-Hamiltonian pairs with the corresponding (conjectural) third-order
operators.

\subsection{The case \texorpdfstring{$N=4$}{N=4}}

In the case $N=4$ for long time only a first-order local HHO was known
\cite{ferapontov96:_hamil} in the case $\eta^{(1)}$ \eqref{eq:14}. A few
years ago, a third-order HHO compatible with the known first-order HHO was
found by a complicated procedure \cite{PV15}, later simplified in
\cite{FPV17:_system_cl} by means of equations~\eqref{V}.

Here we will present a third-order HHO for Dubrovin's normal form $\eta^{(2)}$.
This particular form of $\eta$ generates the following WDVV system (here
$(t,x,y,z) = (t^1,t^2,t^3,t^4)$):

\small
\begin{equation}\label{eq:22}
  \begin{split}
    &\mu f_{yyz}f_{zzz} + 2f_{yyz}f_{xyz} - f_{yyy}f_{xzz} - f_{xyy}f_{yzz} - \mu f_{yzz}^2=0,\\
    & f_{xxy}f_{yzz} - f_{xxz}f_{yyz} - \mu f_{zzz}f_{xyz} + f_{zzz} + f_{xyy}f_{xzz} + \mu f_{xzz}f_{yzz}   - f_{xyz}^2=0,\\
    & f_{xxy}f_{yyz} - f_{xxz}f_{yyy} + \mu f_{yyz}f_{xzz} - \mu f_{xyz}f_{yzz} + f_{yzz}=0,\\
    & f_{xxy}f_{xzz} - \mu f_{xxz}f_{zzz} - 2f_{xxz}f_{xyz} + f_{xxx}f_{yzz} + \mu f_{xzz}^2=0,\\
    & f_{xxz}f_{xyy} + \mu f_{xxz}f_{yzz} - f_{yyz}f_{xxx} - \mu f_{xzz}f_{xyz} + f_{xzz}=0,\\
    & f_{xxy}f_{xyy} + \mu f_{xxz}f_{yyz} - f_{xxx}f_{yyy} - \mu f_{xyz}^2 +
    2f_{xyz}=0. %
  \end{split}
\end{equation}
\normalsize The above overdetermined system of non-linear PDEs can be rewritten
in the form of two commuting hydrodynamic-type system using the
identifications
$a=f_{xxx}, b=f_{xxy}, c=f_{xxz}, d=f_{xyy}, e=f_{xyz}, f=f_{xzz}$. One of the
systems is:

\small
\begin{equation}
  \begin{split}
    a_y &= b_x, \qquad b_y = d_x, \qquad c_y = e_x, \\
    d_y &=\left(\frac{bcdf\mu^2-bce^2\mu^2-c^2de\mu^2+ae^3\mu^2+b^2cd\mu -}{\delta}\right. \\
    &+ \left. \frac{-bade\mu+2bce\mu+c^2d\mu-3ae^2\mu+bad+cf\mu+2ae}{\delta}\right)_x, \\
    e_y &=\left(\frac{-c^2fe\mu^2+af^2e\mu^2+bafe\mu+c^3d\mu - }{\delta}\right. \\
    &+ \left. \frac{-cadf\mu-cae^2\mu+c^2f\mu-af^2\mu-baf+2cae}{\delta}\right)_x, \\
    f_y &= \left(\frac{-c^2e^2\mu^2+afe^2\mu^2+bc^2d\mu-cade\mu+2c^2e\mu-2afe\mu+cad+af}{\delta}\right)_x,\\
  \end{split}
\end{equation}
\normalsize where $\delta=-c^3\mu^2+caf\mu^2+bca\mu-a^2e\mu+a^2$. As an
example, the right-hand side of the fourth equation comes from the
compatibility condition $d_y=(f_{yyy})_x$ and the expression of $f_{yyy}$ in
terms of $a$, \dots, $f$ through the system~\eqref{eq:22}.

The above quasilinear system of first-order PDEs admits a unique third-order
HHO determined by the following Monge metric $h_{ij}$ through the
equations~\eqref{V} (only entries with $i<j$ are shown): \allowdisplaybreaks[3]

\small
\begin{alignat*}{2}
  &h_{11}= d^2
  &\ & h_{12}=e^2\mu - 2e \\
  &h_{13}= 2d( - e\mu + 1)
  && h_{14}= - ad + ce\mu - c \\
  &h_{15}= \mu(b/\mu - be + cd - ef\mu + f)
  && h_{16}= e^2\mu^2 - 2e\mu + 1 \\
  &h_{22}= 2c(e\mu - 1)
  && h_{23}= - be\mu + b - cd\mu - ef\mu^2 + f\mu \\
  &h_{24}= c^2\mu
  && h_{25}= - ae\mu + a - bc\mu - cf\mu^2 \\
  &h_{26}= 2c\mu(e\mu - 1)
  && h_{33}= 2\mu(bd + df\mu + e^2\mu/2 - e + 1/\mu) \\
  &h_{34}= ae\mu - a - bc\mu - cf\mu^2
  && h_{35}= \mu(ad + b^2 + 2bf\mu - ce\mu + c + f^2\mu^2) \\
  &h_{36}= \mu(b - be\mu - cd\mu - ef\mu^2 + f\mu)
  && h_{44}= a^2 \\
  &h_{45}= - 2ac\mu
  && h_{46}= c^2\mu^2 \\
  &h_{55}= \mu(2ab + 2af\mu + c^2\mu)
  && h_{56}= \mu( - ae\mu + a - bc\mu - cf\mu^2) \\
  &h_{66}= 2c\mu^2(e\mu - 1) &&\\
\end{alignat*}
\normalsize
\begin{remark}
  We have $2$ commuting quasilinear first-order systems of PDEs for each choice
  of $\eta$, and it is not automatically true that if one of them is
  bi-Hamiltonian the other will be bi-Hamiltonian with respect to the
  \emph{same operators}. However, this is true in the case $\eta^{(1)}$ and
  $N=4$ \cite{PV15}. More generally, it is known
  \cite{tsarev91:_hamil,tsarev85:_poiss_hamil} that if a diagonalizable
  quasilinear first-order system of PDEs is Hamiltonian with respect to a
  first-order HHO, then other commuting diagonalizable systems will be
  Hamiltonian with respect to the same operator. Even if this statement carries
  on to our non-diagonalizable first-order WDVV systems (see
  Section~\ref{sec:projectivegeom}), at the moment we can only
  \emph{conjecture} that this extends to the compatible third-order operators.
\end{remark}

\subsection{The case \texorpdfstring{$N=5$}{N=5}}
\label{sec:wdvv-equation-n=5}

This case is completely open: no Hamiltonian formulation was known until now.
We have been able to find one new third-order Hamiltonian operator for the
normal form $\eta^{(1)}$ \eqref{eq:14}. The first-order WDVV systems are
$10$-component systems; one of them (found using $t^2$ and $t^3$ in the
Algorithm) admits one third-order HHO (up to a constant factor) that is defined
by the following Monge metric $h_{ij}$ (only nonzero entries are shown):

\small
\begin{alignat*}{2}
  & h_{11}= -u^7  &\ & h_{12}= -2u^6u^7 \\
  &h_{13}= (u^6)^2+2u^9 && h_{14}= -2u^7 \\
  &h_{15}= -(u^6)^2 && h_{16}= u^2u^7-u^3u^6+u^5u^6+u^8 \\
  &h_{17}= u^1u^7+u^2u^6+u^4  && h_{18}= -2u^6 \\
  &h_{19}= -u^3 && h_{110}= -1 \\
  &h_{22}= 2u^3u^7-2u^5u^7-(u^6)^2+2u^9& & h_{23}= -u^2u^7-u^3u^6+u^5u^6+u^8 \\
  &h_{24}=  -2u^6 && h_{25}= u^2u^7+u^3u^6-u^5u^6+u^8 \\
  &h_{26}= u^1u^7+u^2u^6+(u^3-u^5)^2+u^4 & & h_{27}= u^1u^6-u^2u^3+u^2u^5 \\
  &h_{28}= u^3-2u^5 && h_{29}= -u^2 \\
  &h_{33}= 2u^2u^6+2u^4  && h_{34}= -u^3 \\
  &h_{35}= -2u^2u^6  && h_{36}= -u^1u^6-u^2u^3+u^2u^5 \\
  &h_{37}= (u^2)^2  && h_{38}= -2u^2 \\
  &h_{39}= -u^1  && h_{44}= -2 \\
  &h_{46}= u^2  && h_{47}= u^1 \\
  &h_{55}= 2u^2u^6  && h_{56}= u^1u^6+u^2u^3-u^2u^5 \\
  &h_{57}= -(u^2)^2  && h_{58}= u^2 \\
  &h_{66}= 2u^1u^3-2u^1u^5-(u^2)^2  && h_{67}= -2u^1u^2 \\
  &h_{68}= u^1  && h_{77}= -(u^1)^2 \\
\end{alignat*}
\normalsize\indent For the second normal form $\eta^{(2)}$ we were able to find
a third order operator as well, with the simplifying assumption $\mu=1$ (again,
we used $t^2$ and $t^3$ in the Algorithm). Indeed, the computation is too hard
for the servers that we can use if we allow $\mu$ to be a generic non-zero
constant.  The operator is defined by the following Monge metric $h_{ij}$
(again, we present only non-zero entries with $i\leq j$):

\footnotesize
\begin{alignat*}{2}
  &h_{1\,1}\!= (u^7)^2
  &\ & h_{1\,2}\!= 2u^7u^6 \\
  &h_{1\,3}\!= \!-\!(u^6)^2\!+\!(u^9)^2\!-\!2u^9
  && h_{1\,4}\!= \!-\!(2u^9\!-\!2)u^7 \\
  &h_{1\,5}\!= (u^6)^2
  && h_{1\,6}\!= \!-\!u^2u^7\!+\!u^3u^6\!-\!u^5u^6\!+\!u^8u^9\!-\!u^8\\
  &h_{1\,7}\!= \!-\!u^1u^7\!-\!u^2u^6\!+\!u^4u^9\!-\!u^4
  && h_{1\,8}\!= \!-\!(2u^9\!-\!2)u^6 \\
  &h_{1\,9}\!=
  \!-\!u^{10}u^9\!-\!u^3u^9\!+\!u^4u^7\!+\!u^6u^8\!+\!u^{10}\!+\!u^3
  && h_{1\, 10}\!= (u^9)^2\!-\!2u^9\!+\!1 \\
  &h_{2\,2}\!= \!-\!2u^3u^7\!+\!2u^5u^7\!+\!(u^6)^2\!+\!(u^9)^2\!-\!2u^9
  && h_{2\,3}\!= u^2u^7\!+\!u^3u^6\!-\!u^5u^6\!+\!u^8u^9\!-\!u^8 \\
  &h_{24\,}\!= \!-\!2u^6u^9\!-\!2u^7u^8\!+\!2u^6
  && h_{2\,5}\!= \!-\!u^2u^7\!-\!u^3u^6\!+\!u^5u^6\!+\!u^8u^9\!-\!u^8 \\
  &h_{2\,6}\!= \!-\!u^1u^7\!-\!u^2u^6\!-\!(u^3)^2\!+\!2u^3u^5\!+
  && h_{2\,7}\!= \!-\!u^1u^6\!+\!u^2u^3\!-\!u^2u^5\!+\!u^4u^8 \\
  &\qquad + \!u^4u^9\!-\!(u^5)^2\!+\!(u^8)^2\!-\!u^4 && \\
  &h_{2\,8}\!= u^9(u^3 \!+\!u^{10}\!-\!2u^5)\!+\!u^4u^7\!-\!u^6u^8\!
  && h_{2\,9}\!= u^8(u^5\!-\!u^{10}\!-\!2u^3)\!-\!u^2u^9\!+\!u^4u^6\!+\!u^2 \\
  &\qquad+\!u^{10}\!-\!u^3\!+\!2u^5 &&\\
  &h_{2\,10}\!=(2u^9\!-\!2)u^8
  && h_{3\,3}\!= \!-\!2u^2u^6\!+\!2u^4u^9\!-\!2u^4\\
  &h_{3\,4}\!=
  \!-\!u^{10}u^9\!-\!u^3u^9\!-\!u^4u^7\!+\!u^6u^8\!+\!u^{10}\!+\!u^3
  && h_{3\,5}\!=  2u^6u^2\\
  &h_{3\,6}\!= u^1u^6\!+\!u^2u^3\!-\!u^2u^5\!+\!u^4u^8
  && h_{3\,7}\!= \!-\!(u^2)^2\!+\!(u^4)^2 \\
  &h_{3\,8}\!= \!-\!2u^2u^9\!-\!2u^4u^6\!+\!2u^2
  && h_{3\,9}\!= \!-\!u^1u^9\!-\!u^{10}u^4\!+\!u^2u^8\!-\!u^3u^4\!+\!u^1 \\
  &h_{3\,10}\!= (2u^9\!-\!2)u^4
  && h_{4\,4}\!=2u^{10}u^7\!+\!2u^3u^7\!+\!(u^6)^2\!+\!(u^9)^2\!-\!2u^9\!+\!2 \\
  &h_{4\,5}\!= \!-\!2u^8u^6
  && h_{4\,6}\!=\!-\!u^{10}u^8\!+\!u^2u^9\!-\!2u^3u^8\!-\!u^4u^6\!+\!u^5u^8\!-\!u^2\\
  &h_{4\,7}\!= u^1u^9\!-\!u^{10}u^4\!+\!u^2u^8\!-\!u^3u^4\!-\!u^1
  && h_{4\,8}\!= 2u^{10}u^6\!+\!u^2u^7\!+\!u^3u^6\!+\!u^5u^6\!+\!u^8u^9\!-\!u^8 \\
  &h_{4\,9}\!= u^1u^7\!+\!(u^{10})^2\!+\!2u^{10}u^3\!+\!u^2u^6\!+
  && h_{4\,10}\!= \!-\!u^{10}u^9\!-\!u^3u^9\!-\!u^4u^7\!-\!u^6u^8\!+\!u^{10}\!+\!u^3 \\
  &\qquad + \!(u^3)^2\!-\!u^4u^9\!-\!(u^8)^2\!+\!u^4 &&\\
  &h_{5\,5}\!= \!-\!2u^2u^6\!+\!(u^8)^2
  && h_{5\,6}\!=\!-\!u^1u^6\!-\!u^2u^3\!+\!u^2u^5\!+\!u^4u^8 \\
  &h_{5\,7}\!= (u^2)^2
  && h_{5\,8}\!= \!-\!u^{10}u^8\!+\!u^2u^9\!+\!u^4u^6\!-\!u^5u^8\!-\!u^2 \\
  &h_{5\,9}\!= \!-\!2u^8u^2
  && h_{5\,10}\!= (u^8)^2\\
  &h_{6\,6}\!= \!-\!2u^1u^3\!+\!2u^1u^5\!+\!(u^2)^2\!+\!(u^4)^2
  && h_{6\,7}\!=2u^2u^1 \\
  &h_{6\,8}\!= u^1u^9\!-\!u^{10}u^4\!-\!u^2u^8\!+\!u^3u^4\!-\!2u^4u^5\!-\!u^1
  && h_{6\,9}\!= \!-\!2u^1u^8\!-\!2u^2u^4 \\
  &h_{6\,10}\!= 2u^8u^4
  && h_{7\,7}\!= (u^2)^2 \\
  &h_{7\,8}\!= \!-\!2u^4u^2
  && h_{7\,9}\!= \!-\!2u^4u^1 \\
  &h_{7\,10}\!= (u^4)^2
  && h_{8\,8}\!= (u^{10})^2\!+\!2u^{10}u^5\!+\!2u^2u^6\!-\!2u^4u^9\!+\!(u^5)^2\!+\!2u^4 \\
  &h_{8\,9}\!= u^1u^6\!+\!2u^{10}u^2\!+\!u^2u^3\!+\!u^2u^5\!+\!u^4u^8
  && h_{8\,10}\!= \!-\!u^{10}u^8\!-\!u^2u^9\!-\!u^4u^6\!-\!u^5u^8\!+\!u^2 \\
  &h_{9\,9}\!= 2u^1u^{10}\!+\!2u^1u^3\!+\!(u^2)^2\!+\!(u^4)^2
  && h_{9\,10}\!= \!-\!u^1u^9\!-\!u^{10}u^4\!-\!u^2u^8\!-\!u^3u^4\!+\!u^1 \\
  &h_{10\,10}\!= 2u^4u^9\!+\!(u^8)^2\!-\!2u^4 &&\\
\end{alignat*}
\normalsize

\section{Conclusions: projective geometry of WDVV systems}
\label{sec:projectivegeom}

The results that we achieved so far have very interesting implications in terms
of projective geometry that we will discuss in this Section (see also the
Introduction).

In a recent paper \cite{FPV17:_system_cl} it was proved that any
hydrodynamic-type system that has a third-order homogeneous Hamiltonian
operator has a rich geometric structure. The results transfer to first-order
WDVV systems. Indeed, in the case $N=3$ we have:
\begin{enumerate}
\item There exists a \emph{quadratic line complex} of lines in the projective
  space $\mathbb{P}^3$ associated with the first-order WDVV system. The
  construction is a straightforward generalization of what we have written in
  the Introduction.
\item The first-order WDVV system defines a \emph{linear line congruence} in
  the projective space $\mathbb{P}^4$. This is a $3$-parameter family of lines
  in the projective space $\mathbb{P}^{4}$. Again, this is shown in the
  Introduction.
\item The first-order WDVV system is \emph{linearly degenerate} and belongs to
  the \emph{Temple class}.
\item The first-order WDVV system is \emph{non-diagonalizable}.
\item The first-order WDVV system admits a Hamiltonian and a momentum with
  respect to the third-order HHO; their expressions are local after a potential
  substitution $b^i_x = u^i$ and are given by explicit formulae
  \cite{FPV17:_system_cl}.
\item The first-order WDVV systems are equivalent to the system \eqref{eq:6}
  using a projective reciprocal transformation of the type~\eqref{recip}.
\end{enumerate}

Let $N=4$, $5$. Then, the above properties (with the exception of equivalence)
hold for the first-order WDVV systems that we considered in
Section~\ref{sec:wdvv-equations-n=4} for the normal forms $\eta^{(1)}$,
$\eta^{(2)}$.

Several considerations can be made in view of future research.

\paragraph{Conjecture: every WDVV system is associated with a quadratic line
  complex and a linear line congruence.} This is just a rephrasing of the
conjecture that every first-order WDVV system admits a local third-order
HHO. However, the Conjecture put in this way suggests that WDVV equations have
a projective-geometric interpretation that is yet to be uncovered.  The
generalization to the Oriented Associativity equations for $F$-manifolds is an
active research topic (see \emph{e.g.}  \cite{ABLR20:_semis_f}) and quadratic
line complexes occur also in that case \cite{m.v.19:_bi_hamil_orien_assoc} (see
also \cite{pavlov14:_orien}).  A future role of such objects besides the
bi-Hamiltonian structure that they provide is foreseeable.

\paragraph{Projective equivalence of WDVV systems.} The last statement in the
above list implies that, in the case $N=3$, if our WDVV systems admit a
third-order homogeneous operator they should in principle be all equivalent.
In practice, it is extremely difficult to solve the equations for the unknown
transformation. An explicit transformation between the WDVV
equations~\eqref{eq:13} and~\eqref{eq:1} in \cite{ferapontov96:_hamil} shows
the computational difficulties. It is preferable to recompute the operators for
any specific presentation of the WDVV system. Note that the
transformation~\eqref{recip} is \emph{not} an invariance transformation of WDVV
equations. In principle, the equivalence does no longer hold in higher
dimensions.

\paragraph{First-order operators and projective geometry.} First-order local or
non-local HHOs do not have a projective-geometric interpretation yet. However,
we expect that those that are \emph{compatible} with third-order HHOs will have
a projective-geometric role. It is an interesting remark the fact that the
metric of the first-order operators that we found have rational coefficients
(in upper indices!) and the denominator is always the square root of the
determinant of the Monge metric. Such a determinant is a perfect square, and
its zero locus is the K\"ummer surface of the underlying quadratic line
complex, see \cite{FPV16,FPV14}.

\paragraph{Projective geometry in the original WDVV setting.} The
projective-geometric structures that we found so far might be transported in
the `initial' setting of WDVV. A theoretical framework for the Hamiltonian
formalism for general PDEs has already been developed in
\cite{KerstenKrasilshchikVerbovetskyVitolo:HSGP}. In that framework,
variational bivectors for the WDVV equation in the non-evolutionary
form~\eqref{eq:13} have been found (although with an explicit dependence on the
independent variables) \cite{KKVV}. An understanding of the role of quadratic
line complexes and line congruences in the initial formulation might shed new
light on the relationship between WDVV equations and projective-geometric
invariants.

\bigskip

\textbf{Acknowledgements.}  R.V. would like to thank M. Casati,
E.V. Ferapontov, J.S. Krasil'shchik, P. Lorenzoni, M.V. Pavlov, A. Sergyeyev,
A.M. Verbovetsky for many scientific discussions throughout years of scientific
cooperation. Thanks are also due to M. Beccaria for useful suggestions.

Computational resources were supplied by the project ``e-Infrastruktura CZ''
(e-INFRA LM2018140) provided within the program Projects of Large Research,
Development and Innovations Infrastructures and by the Dept. of Mathematics and
Physics ``E. De Giorgi'' of the Universit\`a del Salento.

The research of JV was supported in part by the Specific Research grant
SGS/13/2020 of the Silesian University in Opava, Czechia.

The research of RV has been funded by the Dept. of Mathematics and Physics
``E. De Giorgi'' of the Universit\`a del Salento, Istituto Naz. di Fisica
Nucleare IS-CSN4 \emph{Mathematical Methods of Nonlinear Physics}, GNFM of
Istituto Nazionale di Alta Matematica.

\small

\providecommand{\cprime}{\/{\mathsurround=0pt$'$}}
  \providecommand*{\SortNoop}[1]{}

\end{document}